\newcommand{\propose}[1]{{\color{black}#1}}
\newcommand\blfootnote[1]{%
  \begingroup
  \renewcommand\thefootnote{}\footnote{#1}%
  \addtocounter{footnote}{-1}%
  \endgroup
}
\newcommand*\xmnote[3][0pt]{}
\let\oldmarginnote\marginnote
\renewcommand*{\marginnote}[1]{%
   \begingroup%
   \ifodd\value{page}
     \if@firstcolumn\reversemarginpar\fi
   \else
     \if@firstcolumn\else\reversemarginpar\fi
   \fi
   \oldmarginnote{#1}%
   \endgroup%
}
\newcommand{\mnote}[1]{}
\newcommand{\comment}[1]{}
\newcommand{\port}{\ensuremath{\mathit{pt}}}
\newcommand{\switch}{\ensuremath{\mathit{sw}}}
\newcommand{\lpt}{\ensuremath{\mathit{lp}}}
\newcommand{\ntr}{\ensuremath{\mathit{ntr}}}
\newcommand{\fin}{\ensuremath{\mathit{fin}}}
\newcommand{\eid}{\ensuremath{\mathit{eid}}}
\newcommand{\coloneq}{\ensuremath{\mathord{::=}}}
\newcommand{\pt}{\ensuremath{\mathit{pt}}}
\newcommand{\host}{\ensuremath{\mathit{h}}}
\newcommand{\pkts}{\ensuremath{\mathit{pkts}}}
\newcommand{\pkt}{\ensuremath{\mathit{pkt}}}
\newcommand{\steps}[1]{\ensuremath{\xrightarrow{#1}}}
\newcommand{\listplus}{\ensuremath{\mathord{@}}}
\newcommand{\fld}{\ensuremath{\mathit{f}}}
\newcommand{\PTraces}{\ensuremath{\mathit{Traces}}}
\newcommand{\set}[2]{\ensuremath{#1 \mathord{:=} #2}}
\newcommand{\andalso}{\quad\;}
\newcommand{\cons}{\ensuremath{\mathord{::}}}
\newcommand\incircbin
\newcommand\@incircbin[2]
\newcommand{\oeq}{\incircbin{=}}
\newcommand{\oneq}{\incircbin{\not=}}
\newcommand{\llrrparen}[1]{\llparenthesis #1 \rrparenthesis}
\definecolor{lightred}{RGB}{255,128,128}
\definecolor{lightblue}{RGB}{128,128,255}
\definecolor{lightgreen}{RGB}{128,255,128}
\definecolor{lightorange}{RGB}{255,204,128}
\definecolor{lightgray}{rgb}{0.98,0.98,0.98}
\definecolor{darkorange}{RGB}{255,132,0}
\definecolor{darkgreen}{RGB}{0,102,0}
\newcommand{\budget}[1]{}
\newtheorem{theorem}{Theorem}
\newtheorem{lemma}{Lemma}
\newtheorem{definition}{Definition}
\newcommand{\kw}[1]{\text{\tt\bf #1}}
\DeclareFontFamily{U}{mathb}{}
\DeclareFontShape{U}{mathb}{m}{n}{
  <-5.5> mathb5
  <5.5-6.5> mathb6
  <6.5-7.5> mathb7
  <7.5-8.5> mathb8
  <8.5-9.5> mathb9
  <9.5-11.5> mathb10
  <11.5-> mathbb12
}{}
\DeclareSymbolFont{mathb}{U}{mathb}{m}{n}
\DeclareMathSymbol{\sqcdot}{\mathbin}{mathb}{"0D}%
\lstdefinelanguage{scala}{
  morekeywords={abstract,case,catch,class,def,%
    do,else,extends,false,final,finally,%
    for,if,implicit,import,match,mixin,%
    new,null,object,override,package,%
    private,protected,requires,return,sealed,%
    super,this,throw,trait,true,try,%
    type,val,var,while,with,yield},
  otherkeywords={=>,<-,<\%,<:,>:,\#,@},
  sensitive=true,
  morecomment=[l]{//},
  morecomment=[n]{/*}{*/},
  morestring=[b]",
  morestring=[b]',
  morestring=[b]"""
}
\newcommand{\spacehack}[1]{}
\def\implTraceFig{
\begin{figure*}[t]
\footnotesize
\renewcommand{\dots}{\ensuremath{..}}
\fbox{~\begin{minipage}{0.975\textwidth}
\(
\begin{array}{@{~}l@{~~}|@{~~}l@{~~}|@{~~}l@{~}}
\begin{array}{llcl}
\textit{Switch ID} & n  & \in & \mathbb{N}\\
\textit{Port ID} & m  & \in & \mathbb{N}\\
\textit{Host ID} & \host  & \in & \mathbb{N}\\
\textit{Location} & l & \coloneq & n:m\\
\end{array} & \begin{array}{llcl}
\textit{Packet} & \pkt & \coloneq & \{ \fld_1; \cdots; \fld_k; C; digest \}\\
\textit{Located Packet} & \lpt & \coloneq & (\pkt,l)\\
\textit{Queue Map} & qm & \coloneq & \{ n \mapsto \pkts, \cdots \} \\
\textit{Link} & lk & \coloneq & (l, l) \\
\textit{Links} & L & \coloneq & \{lk,\cdots\}\\
\textit{Event} & e & \coloneq & (\varphi, l) \\
\textit{Event-set} & E & \coloneq & \{e, \cdots\}\\
\end{array} & \begin{array}{llcl}
\textit{Configuration} & C & \coloneq & \{(\lpt, \lpt), \cdots\} \\
\textit{Enabling Rel.} & \vdash & \coloneq & \{(E,e), \cdots \} \\
\textit{Consist. Pred.} & con & \coloneq & \{E, \cdots\} \\
\textit{Config. Map} & g & \coloneq & \{ E \mapsto C, \cdots \} \\
\textit{Switch} & sw & \coloneq & (n,qm, E, qm) \\
\textit{Queue, Control.} & Q,R & \coloneq & E \\
\textit{Switches} & S & \coloneq & \{sw, \cdots\}\\
\end{array}
\end{array}
\)

\smallskip
\hrule
\vspace{-0pt}
\begin{mathpar}
\hspace*{-1.5pc}
\inferrule*[Right=In]{ 
  (h,n{:}m) \in L  \andalso S = S' {\cup} \{(n,qm[m {\mapsto} \pkts],E,qm_2)\}
}{ 
  (Q,R,S)
  \steps{} 
  (Q,R,S' {\cup} \{(n,qm[m {\mapsto} \pkts\listplus[\pkt[C {\gets} {g(E)]}]],E,qm_2)\})
}
\and
\hspace{-3mm}\inferrule*[Right=Out]{ 
  (n{:}m,h) \in L \andalso S = S' {\cup} \{(n,q_1,E,qm[m {\mapsto} \pkt\cons\pkts])\}
}{ 
  (Q,R,S)
  \steps{} 
  (Q,R,S' {\cup} \{(n,q_1,E,qm[m {\mapsto} \pkts])\})
} \vspace{-0pt} \\
\hspace*{-5mm}
\inferrule*[Right=Switch]{ 
  E'=\{e:(E\cup\pkt.digest)\vdash e\land con(E \cup \pkt.digest \cup \{e\}) \land(\pkt,n{:}m)\models e\}
  \andalso \\ \{\lpt : \pkt.C((\pkt,n{:}m),\lpt)\}=\{(\pkt_1,n{:}m_1),\cdots\}
  \andalso S=S' \cup \{(n,qm[m \mapsto\pkt\cons\pkts],E,qm_2[m_1 \mapsto\pkts_1,\cdots])\}
}{ 
  (Q,R,S)
  \steps{} 
  (Q \cup E',R, S' \cup \{(n,qm[m \mapsto\pkts],  E  \cup  E'  \cup  \pkt.digest, \\ qm_2[m_1 \mapsto\pkts_1\listplus[\pkt_1[digest \gets \pkt_1.digest  \cup  E  \cup  E']],\cdots])\})
}
\end{mathpar}

\vspace{-10pt}

\begin{mathpar}
\inferrule*[Right=Link]{ 
  (n_1{:}m_1,n_2{:}m_2) \in L  \andalso S = S' \cup \{(n_1,qm_1,E_1,qm_2[m_1 \mapsto \pkt\cons\pkts]),(n_2,qm_3[m_2 \mapsto \pkts'],E_2,qm_4)\}
}{ 
  (Q,R,S)
  \steps{} 
  (Q,R,S' \cup \{(n_1,qm_1,E_1,qm_2[m_1 \mapsto \pkts]),(n_2,qm_3[m_2 \mapsto \pkts'\listplus[\pkt]],E_2,qm_4)\})
} \vspace{-0pt} \\
\and
\hspace*{-1cm}
\inferrule*[Right=CtrlRecv]{ 
  Q = Q' \cup \{e\}
}{ 
  (Q,R,S)
  \steps{} 
  (Q',R \cup \{e\}),S)
}
\and
\inferrule*[Right=CtrlSend]{ 
  R = R' \cup \{e\} \andalso
  S = S' \cup \{(n,qm,E,qm_2)\}
}{ 
  (Q,R,S)
  \steps{} 
  (Q,R,S' \cup \{(n,qm,E \cup \{e\},qm_2)\})
}
\end{mathpar}

\end{minipage}~}
\caption{Implemented program semantics.}
\label{fig:actual_model}
\spacehack{-.75em}
\end{figure*}%
}
\begin{document}

\setlength{\pdfpageheight}{\paperheight}
\setlength{\pdfpagewidth}{\paperwidth}

\toappear{}

\makeatletter
\def\@ivtitleauthors#1#2#3#4{%
  \if \@andp{\@emptyargp{#2}}{\@emptyargp{#3}}%
    \noindent \@setauthor{40pc}{#1}{\@false}\par
  \else\if \@emptyargp{#3}%
    \noindent \@setauthor{17pc}{#1}{\@false}\hspace{3pc}%
              \@setauthor{17pc}{#2}{\@false}\par
  \else\if \@emptyargp{#4}%
    \noindent \@setauthor{17pc}{#1}{\@false}\hspace{3pc}%
              \@setauthor{17pc}{#3}{\@false}\par
  \else
    \noindent \@setauthor{9.3333pc}{#1}{\@false}\hspace{1.5pc}%
              \@setauthor{9.3333pc}{#2}{\@false}\hspace{1.5pc}%
              \@setauthor{9.3333pc}{#3}{\@false}\hspace{1.5pc}%
              \@setauthor{9.3333pc}{#4}{\@true}\par
    \relax
  \fi\fi\fi
  \vspace{10pt}} %

\def \@maketitle {%
  \begin{center}
  \@settitlebanner
  \let \thanks = \titlenote
  {\leftskip = 0pt plus 0.25\linewidth
   \rightskip = 0pt plus 0.25 \linewidth
   \parfillskip = 0pt
   \spaceskip = .7em
   \noindent \LARGE \bfseries \@titletext \par}
  \vskip 6pt %
  \noindent \Large \@subtitletext \par
  \vskip 22pt %
  \ifcase \@authorcount
    \@latex@error{No authors were specified for this paper}{}\or
    \@titleauthors{i}{}{}\or
    \@titleauthors{i}{ii}{}\or
    \@titleauthors{i}{ii}{iii}\or
    \@ivtitleauthors{i}{ii}{iii}{iv}\or
    \@titleauthors{i}{ii}{iii}\@titleauthors{iv}{v}{}\or
    \@titleauthors{i}{ii}{iii}\@titleauthors{iv}{v}{vi}\or
    \@titleauthors{i}{ii}{iii}\@titleauthors{iv}{v}{vi}%
                  \@titleauthors{vii}{}{}\or
    \@titleauthors{i}{ii}{iii}\@titleauthors{iv}{v}{vi}%
                  \@titleauthors{vii}{viii}{}\or
    \@titleauthors{i}{ii}{iii}\@titleauthors{iv}{v}{vi}%
                  \@titleauthors{vii}{viii}{ix}\or
    \@titleauthors{i}{ii}{iii}\@titleauthors{iv}{v}{vi}%
                  \@titleauthors{vii}{viii}{ix}\@titleauthors{x}{}{}\or
    \@titleauthors{i}{ii}{iii}\@titleauthors{iv}{v}{vi}%
                  \@titleauthors{vii}{viii}{ix}\@titleauthors{x}{xi}{}\or
    \@titleauthors{i}{ii}{iii}\@titleauthors{iv}{v}{vi}%
                  \@titleauthors{vii}{viii}{ix}\@titleauthors{x}{xi}{xii}%
  \else
    \@latex@error{Cannot handle more than 12 authors}{}%
  \fi
  \vspace{1.75pc} %
  \end{center}}
\makeatother

\def\titletext{Event-Driven Network Programming}
\title{\titletext}
\preprintfooter{\titletext}

\authorinfo{Jedidiah McClurg}
           {CU Boulder, USA}
           {\small jedidiah.mcclurg@colorado.edu}
\authorinfo{Hossein Hojjat}
           {Cornell University, USA}
           {\small hojjat@cornell.edu}
\authorinfo{Nate Foster}
           {Cornell University, USA}
           {\small jnfoster@cs.cornell.edu}
\authorinfo{Pavol {\v C}ern\'y}
           {CU Boulder, USA}
           {\small pavol.cerny@colorado.edu}

\maketitle

\def\naive/{na\"{\i}ve}

\begin{abstract}
Software-defined networking (SDN) programs must simultaneously
describe static forwarding behavior and dynamic updates in response to
events. Event-driven updates are critical to get right, but
difficult to implement correctly due to the high degree of concurrency
in networks. Existing SDN platforms offer weak guarantees that can 
break application invariants, leading to problems such as dropped
packets, degraded performance, security violations, etc. This paper
introduces {\em event-driven consistent updates} that are
guaranteed to preserve well-defined behaviors when transitioning
between configurations in response to events. We propose {\em network
  event structures} (NESs) to model constraints on updates, such as
which events can be enabled simultaneously and causal dependencies
between events. We define an extension of the NetKAT language with
mutable state, give semantics to stateful programs using NESs, and
discuss provably-correct strategies for implementing NESs in SDNs.
Finally, we evaluate our approach empirically,
demonstrating that it gives well-defined consistency guarantees while
avoiding expensive synchronization and packet buffering.
\end{abstract}

\category{C.2.3}{Computer-communication Networks}{Network Operations}[Network Management]
\category{D.3.2}{Programming Languages}{Language Classifications}[Specialized application languages]
\category{D.3.4}{Programming Languages}{Processors}[Compilers]

\keywords
network update, consistent update, event structure, software-defined networking, SDN, NetKAT

\section{Introduction}
\label{sec:introduction}

Software-defined networking (SDN) allows network behavior to be
specified using logically-centralized programs that execute on
general-purpose machines. These programs react to events such as
topology changes, traffic statistics, receipt of packets, %
etc. by modifying sets of forwarding rules installed on
switches. SDN programs can implement a wide range of advanced network
functionality including fine-grained access
control~\cite{ethane-sigcomm07}, network
virtualization~\cite{koponen2014network}, traffic
engineering~\cite{b4-sigcomm13,swan-sigcomm13}, and many others.

Although the basic SDN model is simple, building sophisticated
applications is challenging in practice. Programmers must keep track
of numerous low-level details such as encoding configurations into
prioritized forwarding rules, processing concurrent events, managing
asynchronous events, dealing with unexpected failures, etc. To address
these challenges, a number of domain-specific network programming
languages have been
proposed~\cite{anderson2014netkat,nelson2014tierless,foster2011frenetic,voellmy2013maple,soule2014merlin,kang2013optimizing,moshref2013scalable,kim2015kinetic}.
The details of these languages vary, but they all offer higher-level
abstractions for specifying behavior (e.g., using mathematical
functions, boolean predicates, relational operators, etc.), and rely
on a compiler and run-time system to generate and manage the
underlying network state.

Unfortunately, the languages that have been proposed so far lack
critical features that are needed to implement dynamic, event-driven
applications.  Static languages such as
NetKAT~\cite{anderson2014netkat} offer rich constructs for describing
network configurations, but lack features for responding to events and
maintaining internal state. Instead, programmers must write a stateful
program in a general-purpose language that generates a stream of
NetKAT programs. Dynamic languages such as FlowLog and
Kinetic~\cite{nelson2014tierless,kim2015kinetic} offer stateful
programming models, but they do not specify how the network behaves
while it is being reconfigured in response to state
changes. Abstractions such as consistent updates provide strong
guarantees during periods of
reconfiguration~\cite{reitblatt2012abstractions,mcclurg2015efficient},
but current realizations are limited to properties involving a single
packet (or set of related packets, such as a unidirectional flow). To
implement {\em correct} dynamic SDN applications today, the most effective option is
often to use low-level APIs, forgoing the benefits of
higher-level languages entirely.

\paragraph*{Example: Stateful Firewall.}
To illustrate the challenges that arise when
implementing dynamic applications, consider a topology where an
internal host $H_1$ is connected to switch $s_1$, an external host
$H_4$ is connected to a switch $s_4$, and switches $s_1$ and $s_4$ are
connected to each other (see Figure~\ref{fig:examples-topo-2}).
Suppose we wish to implement a stateful firewall: at all times,
host $H_1$ is allowed to send packets to
host $H_4$, but $H_4$ should only be allowed to send packets to $H_1$
if $H_1$ previously initiated a connection. Implementing even this
simple application turns out to be difficult, because it
involves coordinating behavior across multiple devices and packets.
\propose{%
The basic idea is that upon receiving a packet from $H_1$
at $s_4$, the program will need to issue a command to install a forwarding
rule on $s_4$ allowing traffic to flow from $H_4$ back to $H_1$.
  There are two
\xmnote{\FiveStar}{Q6}%
  straightforward (but incorrect) implementation strategies on current SDN controllers.

\begin{figure}[t]
\centering
\centerline{\includegraphics[trim = 0.0in 0.0in 0.0in 0.0in, clip,height=0.85in]{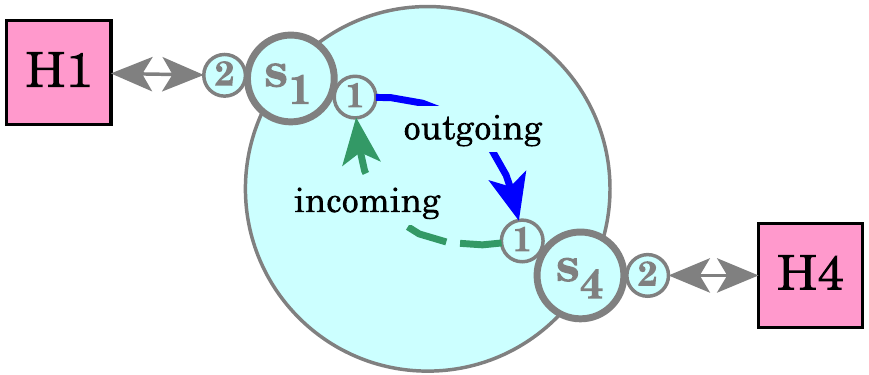}}
\caption{Topology for simple Stateful Firewall.}
\label{fig:examples-topo-2}
\end{figure}

  \begin{compactenum}
  \item The outgoing request from $H_1$ is diverted to the controller,
    which sets up flow tables for the incoming path and also forwards
    the packet(s) to $H_4$.  Reconfiguring flow tables takes time, so
    $H_4$'s response will likely be processed by the default drop
    rule. Even worse, if the response is the SYN-ACK in a TCP
    handshake, normal retransmission mechanisms will not help---the
    client will have to wait for a timeout and initiate another TCP
    connection. In practice, this greatly increases the latency of
    setting up a connection, and potentially wreaks havoc on
    application performance.

  \item The outgoing request is buffered at the controller, which sets
    up the flow tables for the incoming path but waits until the rules
    are installed before forwarding the packet(s).  This avoids the
    problem in (1), but places extra load on the controller and also
    implements the firewall incorrectly, since incoming traffic is
    allowed before the outgoing request is delivered.
      Leaving the network unprotected (even briefly)
      can be exploited by a malicious attacker.
  \end{compactenum}
  }

\noindent
\propose{Thus, while it is tempting to think that reliability
  mechanisms built into protocols such as TCP already prevent (or at
  least reduce) these types of errors, this is not the case.  While it
  is true that some applications can tolerate long latencies, dropped
  packets, and weak consistency, problems with updates {\em do} lead
  to serious problems in practice. As another example, consider an
  intrusion detection system that monitors suspicious traffic---%
inadvertently dropping or allowing even a few packets due to a
  reconfiguration would weaken the protection it provides.}
The root of these problems is that {\em existing SDN frameworks do not
  provide strong guarantees during periods of transition
  between configurations in response to events}.  An eventual
guarantee is not strong enough to implement the stateful firewall
correctly, and even a consistent
update~\cite{reitblatt2012abstractions} would not suffice, since consistent updates only
dictate what must happen to individual packets.

\propose{
\paragraph*{Existing Approaches.}

Experienced network 
\xmnote{\FiveStar}{Q1}%
operators may be able to use existing
 tools/methods to correctly implement event-driven configuration
 changes.
  However, as seen above, this requires
  thinking carefully about the potential interleavings of events and updates,
  delegating atomic operations to the controller (incurring a
  performance hit), etc.

As mentioned, %
\xmnote{\FiveStar}{Q3}%
there are stateful programming systems that attempt to make
this process easier for the programmer, but
update strategies in  
these systems
  either
  offer no consistency guarantees during dynamic updates,
  rely on expensive processing via the controller,
  and/or require the programmer to craft an update protocol by hand.
  In this paper,
  \xmnote{\FiveStar}{Q2}%
  we group these approaches together, using the term
  {\em uncoordinated update} to describe their lack of support for coordinating
  local updates in a way that ensures global consistency.
}

\paragraph*{Event-Driven Consistent Update.}
We propose a new semantic correctness condition with
clear guarantees about updates triggered by events. This
enables specification of how the network should behave during
updates, and enables %
precise formal reasoning
about stateful network programs.

An {\em event-driven consistent update} is denoted as a triple 
$C_i \xrightarrow{e} C_f$,
where $C_i$ and $C_f$ are the initial and final configurations respectively,
and $e$ is an 
event. Intuitively, these configurations describe the forwarding
behaviors of the network before/after the update, while the event
describes a phenomenon, such as the receipt of a packet at a particular switch,
that triggers the update itself.
Semantically, an event-triggered consistent update ensures that for each packet:
\begin{compactenum}
\item {\em the packet is forwarded consistently,} i.e.
   it must be processed entirely by a {\em single configuration}
  $C_i$ or $C_f$, and
\item {\em the update does not happen too early,} meaning that if every %
  switch traversed by the packet has {\em not} heard about the event, then the
  packet must be processed by $C_i$, and
\item {\em the update does not happen too late,} meaning that if every %
  switch traversed by the packet {\em has} heard about the event, then the packet must
  be processed by $C_f$.
\end{compactenum}
The first criterion requires that updates are consistent, which is
analogous to a condition proposed previously by Reitblatt et
al.~\cite{reitblatt2012abstractions}. However, a consistent update
alone would not provide the  necessary guarantees for the stateful
firewall example, as it applies only to a single packet, and not to
multiple packets in a bidirectional flow. The last two criteria relate
the packet-processing behavior on each switch to the events it has
``heard about.'' Note that these criteria leave substantial
flexibility for implementations: packets that do not satisfy the
second or third condition can be processed by either the preceding or following
configuration. It remains to define what it means for a switch $s$ to
have ``heard about'' an event $e$ that occurred at switch $t$
(assuming $s \neq t$). We use
a causal model and say that $s$ hears about $e$ when a packet, which
was processed by $t$ after $e$ occurred, is received at $s$. This
can be formalized using a ``happens-before''
relation. 

Returning to the stateful firewall, it is not hard to see that the
guarantees offered by event-driven consistent updates are
sufficient to ensure correctness of the overall
application. Consider an update $C_i \xrightarrow{e} C_f$. 
In $C_i$, $H_1$ can send packets to $H_4$, but not vice-versa. In
$C_f$, additionally $H_4$ can send packets to $H_1$. The event $e$ is
the arrival at $s_4$ of a packet from $H_1$ to $H_4$.
Before $e$ occurs, can $H_4$ send a packet to $H_1$, as is possible in $C_f$? No,
since {\em none} of the switches along the necessary path have
heard about the event.
Now, imagine that the
event $e$ occurs, and $H_4$ wants to send a packet to $H_1$
afterwards. Can $s_4$ drop the new packet, as it would have done in the
initial configuration $C_i$? No, because the {\em only} switch the packet would
traverse is $s_4$, and $s_4$ has heard about the event, meaning
that the only possible correct implementation should process this new packet in
$C_f$.

\paragraph*{Event-Driven Transition Systems.}
To specify event-driven network programs, we use labeled
transition systems called {\em event-driven transition systems} (ETSs). In
an ETS, each node is annotated with a network configuration and each
edge is annotated with an event. For example, the stateful
firewall application would be described as a two-state ETS,
one state representing the initial configuration before
$H_1$ has sent a packet to $H_4$, and another representing the
configuration after this communication has occurred. There would be a
transition between the states corresponding to receipt of a packet from $H_1$ to
$H_4$ at $s_4$. This model is similar to the finite state machines
used in Kinetic~\cite{kim2015kinetic} and FAST~\cite{moshref2014flow}.
However, whereas Kinetic uses uncoordinated updates,
we impose additional constraints on our ETSs which allow 
them to be implemented {\em correctly} with respect to
our consistency property. For example, we extend event-triggered
consistent updates to {\em sequences}, requiring each sequence of transitions in
the ETS to satisfy the property.
For simplicity, in this paper, we focus on finite state systems and events
corresponding to {\em packet delivery}. However, these are
not fundamental assumptions---our design extends naturally to other
notions of events, as well as infinite-state systems.

\paragraph*{Network Event Structures.}
The key challenge in implementing event-driven network programs
stems from the fact that at any time, the switches may have
different views of the global set of events that have occurred.
Hence, for a given ETS, several different updates may be
enabled at a particular moment of time, and we need a way to resolve
conflicts. We turn to the well-studied model of event
structures~\cite{winskel1987event}, which allows us to constrain transitions in two ways:
(1) {\em causal dependency}, which requires that an event $e_1$
happens before another event $e_2$ may occur, and (2) {\em compatibility},
which forbids sets of events that are in some sense
incompatible with each other from occurring in the same execution.
We present an extension called {\em network event structure} (NES), and
show how an ETS can be encoded as an NES.

\paragraph*{Locality.}
While event-driven consistent updates require immediate responses to
{\em local} events (as in the firewall), they do
not require immediate reactions to events ``at a distance.'' This is
achieved by two aspects of our definitions. 

The first defining aspect of our locality requirements involves the
happens-before (``heard-about") relation in 
event-driven consistent update. 
For example, the receipt of a packet in New
York can not immediately affect the behavior of switches in
London. Intuitively, this makes sense: requiring ``immediate" reaction to
remote events would force synchronization between switches
and buffering of packets, leading to unacceptable performance
penalties. Event-driven consistent update only requires the switches in
London to react {\em after} they have heard about the event in New York.

The second defining aspect of our locality requirements involves the
compatibility constraints in NESs. 
Suppose that New York sends packets to London and Paris, but the program
requires transitioning to a different global state based on who
received a packet first. Clearly, it
would be impossible to implement this behavior without significant
coordination. However, suppose New York and Philadelphia are
sending packets to London, and the program
requires transitioning to a different global state based on whose 
packet was received first in London. This behavior is
easily implementable since the choice is local to London. 
We use NESs to rule out non-local incompatible events---specifically,
we require that incompatible events must occur at the same switch.

Our approach gives consistency guarantees even when an event occurs at
a switch different from the one 
  that will be updated. The change will not happen
  ``atomically" with the event that triggered it, but (a) every
  packet is processed by a single configuration, and (b) the
  configuration change occurs as dictated by event-driven consistent
  update (happens-before) requirements. We show that these
  requirements can be implemented with minimal performance penalty.

Locality issues are an instance of the tension between {\em consistency} and {\em availability}
in distributed systems, which motivates existing
SDN languages to favor availability (avoiding expensive synchronization
and packet buffering) over consistency (offering strong guarantees when state
changes). 
We demonstrate that it is possible to provide the same level
of availability as existing systems, while providing a natural
consistency condition that 
is powerful enough to build many applications. We also show that
weakening the locality requirement forces us to weaken
availability.

Overall, we present a new abstraction based on (i)
a notion of causal consistency requiring that events are
propagated between nodes, (ii) per-packet consistency governing
how packets are forwarded through the network, and (iii) locality
requirements. We believe this is a powerful combination that is a natural fit for building many applications.

\paragraph*{Implementing Network Programs.}
NESs also provide a natural formalism for guiding
an implementation technique for stateful programs. Intuitively, we need switches that can
record the set of events that have been seen locally, make decisions
based on those events, and transmit events to other
switches. Fortunately, in the networking industry there is a trend
toward more programmable data planes: mutable state is already
supported in most switch ASICs (e.g. MAC learning tables) and is also
being exposed to SDN programmers in next-generation platforms such as
OpenState~\cite{bianchi2014openstate} and
P4~\cite{bosshart2014p4}. Using these features, we can implement
an NES as follows.

\begin{compactenum}
 \item Encode the sets of events contained in the NES
   as flat tags that can be carried by packets and tested on
   switches.
\item Compile the configurations contained in the NES
  to a collection of forwarding tables.
\item Add ``guards" to each configuration's forwarding rules
  to explicitly test for the tag enabling the configuration.
\item Add rules to ``stamp'' incoming packets with tags corresponding
  to the current set of events.
\item Add rules to ``learn'' which events have happened by reading
  tags on incoming packets and adding the tags in the local state to
  outgoing packets, as required to implement the happens-before
  relation.
\end{compactenum}
In this paper, we prove that a system implemented
in this way {\em correctly} implements an NES.

\paragraph*{Evaluation.}

To evaluate our design, we built a prototype of the system
described in this paper.$\dagger$
\blfootnote{$\dagger$~The PLDI 2016 Artifact Evaluation Committee (AEC) found that our prototype system %
``met or exceeded expectations." }%
We have
used this to build a number of event-driven
network applications:
\begin{inparaenum}
\item[(a)] a stateful firewall, which we have already described; 
\item[(b)] a learning switch that floods packets going to unknown
  hosts along a spanning tree, but uses point-to-point forwarding for
  packets going to known hosts; 
\item[(c)] an authentication system that initially blocks incoming traffic, but
  allows hosts to gain access to the internal network by sending
  packet probes to a predefined sequence of ports;
\item[(d)] a bandwidth cap that disables access to an external
  network after seeing a certain number of packets; and
\item[(e)] an intrusion detection system that allows all traffic until
  seeing a sequence of internal hosts being contacted in a suspicious
  order.
\end{inparaenum}
We have also built a synthetic application that forwards
packets around a ring topology, to evaluate update scalability. We developed
these applications in an extended version of NetKAT which we call {\em Stateful NetKAT}.
Our experiments show that our implementation technique
provides competitive performance on several important metrics while
ensuring important consistency properties. We draw several conclusions.
\begin{inparaenum}
\item[(1)] Event-driven consistent update allow programmers to easily
write real-world network applications and get the correct behavior,
whereas approaches relying only on uncoordinated consistency
guarantees do not.
\item[(2)] The
performance overhead of maintaining state and manipulating tags
(measured in bandwidth) is within 6\% of an implementation
that uses only uncoordinated update.
\item[(3)] There is an
optimization that exploits common structure in rules across 
states to reduce the number of rules installed on
switches.
In our experiments, a basic heuristic version of this
optimization resulted in a 32-37\% reduction in the number of rules
required on average.
\end{inparaenum}

\paragraph{Summary.}  
Our main contributions are as follows.
\begin{compactitem}
\item We propose a new semantic correctness condition for dynamic network programs called
  \emph{event-driven consistent update} that balances the need for
  immediate response with the need to avoid costly synchronization and
  buffering of packets.
Our consistency property
\xmnote{\FiveStar}{Q7}%
generalizes the guarantees offered by consistent updates, and
is as strong as possible without sacrificing availability.
\item We propose \emph{network event structures} to capture causal dependencies and compatibility
  between events, and show how to implement these using SDN functionality.
\item We describe a compiler based on a stateful extension of
  NetKAT, and present optimizations that reduce the overhead of
  implementing such stateful programs.
\item We conduct experiments showing that our approach gives
  well-defined consistency guarantees, while avoiding expensive
  synchronization and packet buffering.
\end{compactitem}

\vspace{2pt}
\noindent
The rest of this paper is structured as follows:
\S\ref{sec:consistency} formalizes event-driven consistent updates;
\S\ref{sec:model} defines event transition systems, network event
structures, and Stateful NetKAT; \S\ref{sec:implementing} describes
our implementation; and \S\ref{sec:eval} presents experiments.
We discuss related/future work in \S\ref{sec:related}-\ref{sec:discuss},
and conclude in \S\ref{sec:conclusion}.

\section{Event-Driven Network Behavior}
\label{sec:consistency}

This section presents our new consistency model for stateful network
programs: {\em event-driven consistent update}.

\paragraph*{Preliminaries.}
A {\em packet} $\pkt$ %
is a record of fields $\{f_1; f_2;\allowbreak \cdots;\allowbreak f_n\}$,
where fields $f$ represent properties such as source and
destination address, protocol type, etc. The (numeric) values of fields
are accessed via the notation $\pkt.f$, and field updates are denoted $\pkt[f \gets n]$.
A {\em switch}
$\switch$ is a node in the network with one or more {\em ports}
$\pt$. A {\em host} is a switch that can be a source or a sink of
packets. A {\em location} $l$ is a switch-port pair $n{:}m$. Locations may
be connected by (unidirectional) physical links $(l_{src},l_{dst})$ in the topology.

Packet forwarding is dictated by a {\em network
  configuration} $C$. A {\em located packet}
$\lpt=(\pkt,\switch,\port)$ is a tuple consisting of a packet and a location
$\switch{:}\port$. We model $C$ as a
relation on located packets: if $C(\lpt,\lpt')$, then the network maps
$\lpt$ to $\lpt'$, possibly changing its location and rewriting some
of its fields.
Since $C$ is a relation, it allows multiple output packets to be generated from a single input.
{In a real network, the configuration only forwards packets between ports within each individual switch,
but for convenience, we}
assume that our $C$ also captures link behavior (forwarding between switches),
i.e. $C((\pkt,n_1,m_1),(\pkt,n_2,m_2))$ holds for each link $(n_1{:}m_1,n_2{:}m_2)$.
We refer to a sequence of located packets that starts at a host and can be
produced by $C$ as a {\em packet trace}, 
using $\PTraces(C)$ to denote the set of all such packet traces.
We let $\mathcal{C}$ be the set of all configurations.

Consider a tuple $\ntr = (\lpt_0\lpt_1\cdots,T)$, where the first component is a sequence of located
packets, and each $t \in T$ is an increasing sequence of indices corresponding to
located packets in the sequence.
We call such a tuple a {\em network trace} if and only if the following conditions hold:
\begin{compactenum}
\item for each $\lpt_j$, we have $j \in t$ for some $t \in T$, and
\item for each $t = (k_0 k_1 \cdots) \in T$,
$\lpt_{k_0}$ is at a host, and $\exists C \in \mathcal{C}$ such that $C(\lpt_{k_i},\lpt_{k_{i+1}})$ holds for all $i$, and
\item if we consider the graph $G$ with nodes $\{k : (\exists t\in T : k \in t)\}$ and
edges $\{ (k_i,k_{i+1}) : (\exists t\in T : t = k_0 k_1 \cdots k_i k_{i+1} \cdots)\}$,
then $G$ is a family of {\em trees} rooted at $K = \{k_0 : (\exists t \in T : t = k_0 \cdots)\}$.
\end{compactenum}
We will use $\ntr{\downarrow}k$ to denote
the set $\{t \in T : k \in t\}$, and when $t = (k_0 k_1 \cdots) \in T$, we
can use similar notation $\ntr{\downarrow}t$ to denote the packet trace $\lpt_{k_0} \lpt_{k_1} \cdots$.
Intuitively, we have defined a network trace to be an interleaving of
these packet traces
{(the packet traces form the family of {\em trees}
because, as previously mentioned, the configuration allows multiple
output packets from a single input packet).}
Ultimately, we will introduce a consistency definition that
dictates which interleavings of packet traces are correct.

We now define how the network changes its configuration in response to
events.  An {\em event} $e$ is a tuple $(\varphi,\switch,\port)_{\eid}$,
where $\eid$ is an (optional) event identifier and $\varphi$ is a first-order formula over
fields. Events model the arrival of a packet satisfying $\varphi$ (denoted $\pkt \models \varphi$) at
location $\switch{:}\port$.  Note that we could have other types of events%
---anything that a switch can
detect could be an event---but for simplicity, we focus on packet
events.
We say that a located packet $\lpt = (\pkt,\switch',\port')$ {\em matches}
an event $e=(\varphi,\switch,\port)$ (denoted by $\lpt \models
e$) if and only if $\switch=\switch' \land \port=\port' \land \pkt \models \varphi$.

\begin{definition}[Happens-before relation $\prec_\ntr$]
Given a network trace $\ntr = (\lpt_0 \lpt_1 \cdots, T)$, the
happens-before relation $\prec_\ntr$ is the least partial order on
located packets that
\begin{compactitem}
\item respects the total order induced by $\ntr$ at switches, i.e.,
  $\forall i,j : \lpt_i \prec \lpt_j \Leftarrow i<j \land \lpt_i=
  (\pkt,\switch,\port) \land \lpt_j= (\pkt',\switch,\port')$, and
\item respects the total order induced by $\ntr$ for each packet,
  i.e., $\forall i,j : \lpt_i \prec \lpt_j \Leftarrow i<j \land
  \exists t \in T : i \in t \land j \in t$.
\end{compactitem}
\end{definition}

\paragraph*{Event-Driven Consistent Update.}
In Section \ref{sec:introduction}, we informally defined an
event-driven consistent update
as a triple $C_i \xrightarrow{e} C_f$ consisting of an initial configuration $C_i$,
event $e$, and final configuration $C_f$. 
Here, we formalize that definition in a way that describes
{\em sequences} of events and configurations (in the single-event case,
this formal definition is equivalent to the informal one).
We denote an {\em event-driven consistent update} as a pair $(U,\mathcal{E})$, where
$U$ is a sequence
$C_0 \xrightarrow{e_0} C_1 \xrightarrow{e_1} \cdots
\xrightarrow{e_{n}} C_{n+1}$,
and $\{e_0,\cdots,e_n\} \subseteq \mathcal{E}$.

Let $\ntr = (\lpt_0 \lpt_1 \cdots, T)$ be a network trace.   %
Given an event-driven consistent update 
$(U,\mathcal{E})$, 
we need the indices where the events from $U$ first
occurred. Specifically, we wish to find the 
sequence $k_0,\cdots,k_n$ where $\lpt_j$ does not match any $e\in\mathcal{E}$
for any $j > k_n$, and the
following properties hold for all $0 \leq i \leq n$ (assuming $k_{(-1)} = -1$ for
convenience): 
\begin{compactitem}
\item $k_i > k_{i-1}$, and
\item $\lpt_{k_i}$ matches $e_i$, and for all $j$, 
           if $k_{i-1} < j < k_i$ then $\lpt_{j}$ does not match $e_i$
           (i.e., $k_i$ is the first occurrence of $e_i$ after the
           index $k_{i-1}$), and
\item $\exists t \in \ntr{\downarrow}k_i$ such that $t$ is
in $\PTraces(C_{i})$ (intuitively, the event $e_i$ can be triggered only  by
a packet processed in the immediately preceding configuration). 
\end{compactitem}
If such a sequence exists, it is unique, and we denote it by $FO(\ntr,U)$, shorthand for ``first occurrences."

\begin{definition}[Event-driven consistent update correctness]
A network trace $\ntr=(\lpt_0 \lpt_1 \cdots, T)$ is {\em correct} with respect to an event-driven consistent
update $U = C_0 \xrightarrow{e_0} C_1 \xrightarrow{e_1} \cdots
\xrightarrow{e_{n}} C_{n+1}$,
if $FO(\ntr,U) = k_0,\cdots,k_n$ exists, and for all $0 \leq i \leq n$,
the following holds for each packet trace $\ntr{\downarrow}t  = \lpt_0' \lpt_1' \cdots$ where $t\in T$: 
\begin{compactitem}
\item
  $\ntr {\downarrow} t$ is in $\PTraces(C)$ for some $C\in\{C_0,\cdots,C_{n+1}\}$
  (packet is processed entirely by one configuration), and
\item if $\forall j : \lpt_j' \prec \lpt_{k_i}$, then 
  $\ntr  {\downarrow} t$ is in $\PTraces(C)$ for some $C\in\{C_0,\cdots,C_i\}$
  (the packet is processed entirely in a preceding configuration), and 
\item if $\forall j : \lpt_{k_i} \prec \lpt_j'$, then $\ntr
  {\downarrow} t$ is in $\PTraces(C)$ for some $C\in\{C_{i+1},\cdots,C_{n+1}\}$
  (the packet is
  processed entirely in a following configuration).
\end{compactitem}
\end{definition}

\begin{figure}[t]
\centering
\includegraphics[trim = 0.0in 0.0in 0.0in 0.0in, clip,width=0.55\linewidth]{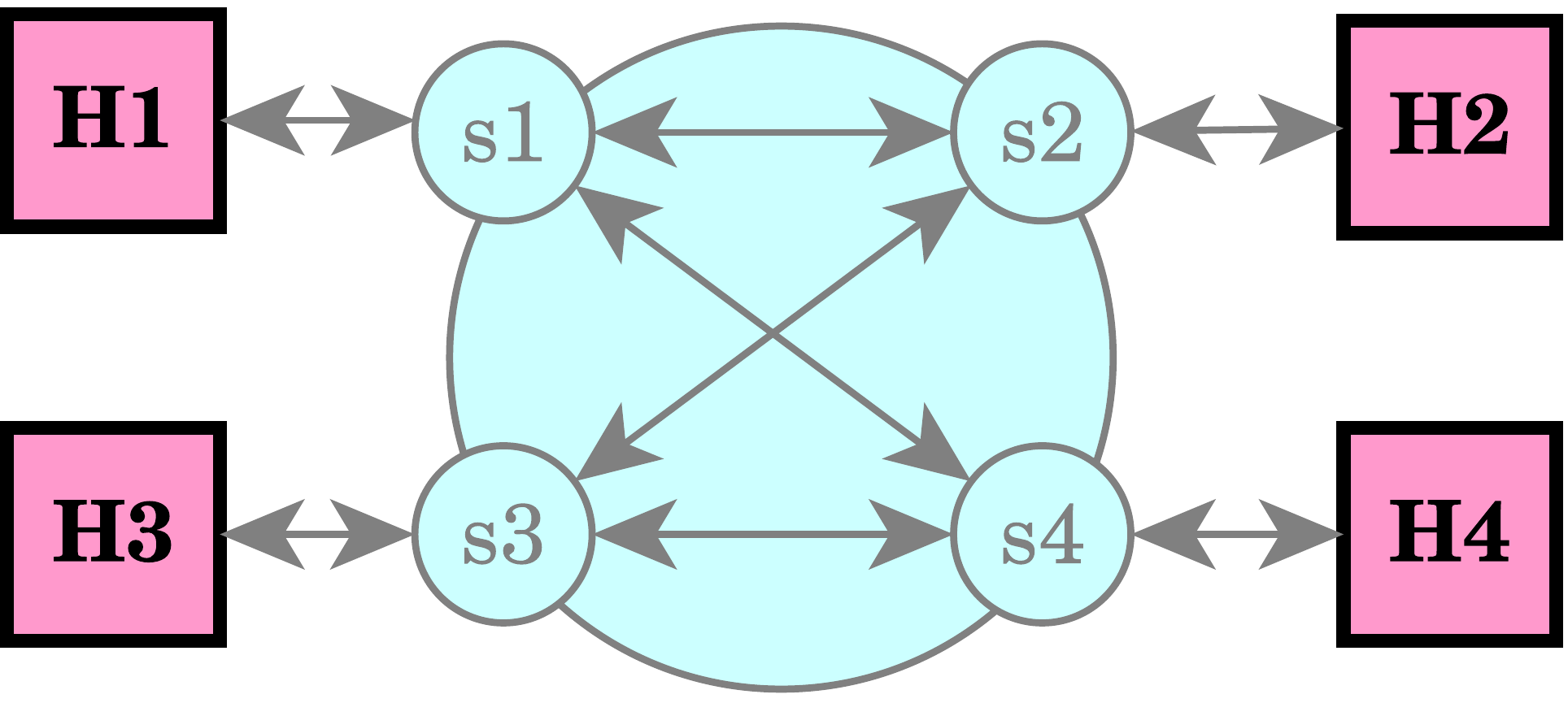}
\caption{Example topology with four switches and hosts.}
\label{fig:statefulIntro4nodes}
\end{figure}

To illustrate, consider Figure~\ref{fig:statefulIntro4nodes}. We
describe an update $C_i \xrightarrow{e} C_f$. In the initial
configuration $C_i$, the host $H_1$ can send packets to $H_2$, but not
vice-versa. In the final configuration $C_f$, traffic from $H_2$ to
$H_1$ is allowed. Event $e$ models the arrival to $s_4$ of a packet
from $H_1$ (imagine $s_4$ is part of a distributed firewall). Assume
that $e$ occurs, and immediately afterwards, $H_2$ wants to send a
packet to $s_1$. Can $s_2$ drop the packet (as it would do in
configuration $C_i$)? Event-driven consistent updates allow this, as
otherwise we would require $s_2$ to react immediately to the event at
$s_4$, which would be an example of action at a distance. Formally,
the occurrence of $e$ is not in a happens-before relation with the
arrival of the new packet to $s_2$. On the other hand, if e.g. $s_4$
forwards some packets to $s_1$ and $s_2$ before the new packet from
$H_2$ arrives, $s_1$ and $s_2$ would be required to change their
configurations, and the packet would be allowed to reach $H_1$.

\paragraph*{Network Event Structures.}
As we have seen, event-driven consistent updates specify how the network should behave
during a sequence of updates triggered by events, but additionally, we want the ability to
capture constraints
between the events themselves. For example, we might wish to say that $e_2$ can
only happen after $e_1$ has occurred, or that $e_2$ and $e_3$ cannot
both occur in the same network trace.

To model such constraints, we turn to the {\em event structures} model
introduced by Winskel~\cite{winskel1987event}.
Intuitively, an event structure endows a set of events $\mathcal{E}$
with
\begin{inparaenum}
\item[(a)] a {\em consistency predicate} ($con$) specifying which events
  are allowed to occur in the same sequence, and 
\item[(b)] an {\em enabling relation} ($\vdash$) specifying a (partial)
  order in which events can occur. 
\end{inparaenum}
This is formalized in the following definition (note that we use $\subseteq_{\fin}$ to
mean ``finite subset," and $\mathcal{P}_{\fin}(X) = \{Y : Y \subseteq_{\fin} \mathcal{P}(X)\}$).

\begin{definition}[Event structure]
\label{def:event_struct}
An event structure is a tuple $(\mathcal{E},con,\vdash)$ where:
\begin{compactitem}
\item $\mathcal{E}$ is a set of events, 
\item $con : (\mathcal{P}_{\fin}(\mathcal{E}) \rightarrow Boolean)$ is a
  consistency predicate that satisfies $con(X) \land Y \subseteq X
  \implies con(Y)$, %
\item $\vdash\; : (\mathcal{P}(\mathcal{E}) \times \mathcal{E}
  \rightarrow Boolean)$ is an enabling relation that satisfies $(X
  \vdash e) \land X \subseteq Y \implies (Y \vdash e)$.
\end{compactitem}
\end{definition}

\noindent An event structure can be seen as defining a transition
system whose states are subsets of $\mathcal{E}$
that are
consistent and reachable via the enabling relation. We refer to such a
subset an as an {\em event-set} (called ``configuration"
in~\cite{winskel1987event}).

\begin{definition}[Event-set of an event structure]
\label{def:event_struct_config}
Given an event structure $N = (\mathcal{E},con,\vdash)$, an {\em
  event-set} of $N$ is any subset $X \subseteq \mathcal{E}$ which is:
\begin{inparaenum}
\item[(a)] consistent: $\forall Y \subseteq_\fin X$, $con(Y)$ holds, and
\item[(b)] reachable via the enabling relation: for each $e \in X$, there
  exists $e_0,e_1,\cdots,e_n \in X$ where $e_n = e$ and $\emptyset \vdash \{e_0\}$ and
  $\{e_0,\cdots,e_{i-1}\} \vdash e_i$ for all $1 \leq i \leq n$. 
\end{inparaenum}
\end{definition}

We want to be able to specify which network configuration should be
active at each event-set of the event structure.  Thus, we need the
following extension of event structures.

\begin{definition}[Network event structure (NES)]
A {\em network event structure} is a tuple
$(\mathcal{E},con,\vdash,g)$ where
$(\mathcal{E},con,\vdash)$ is an event structure, and
$g : (\mathcal{P}(\mathcal{E}) \rightarrow \mathcal{C})$ maps each
event-set of the event structure to a network configuration.
\end{definition}

\paragraph*{Correct Network Traces.}
\label{sec:correct_trace}
We now define what it means for a network trace $\ntr$
to be {\em correct} with respect to an
NES $N=(\mathcal{E},con,\vdash,g)$. We begin by constructing a
sequence $S$ of events that is allowed by $N$. %
A sequence $S = e_0e_1 \cdots e_n$ is {\em allowed by} $N$, if 
$\emptyset \vdash \{e_0\} \land con(\{e_0\})$, and
$\forall 1 \leq i \leq n: (\{e_0,e_1,\cdots,e_{i-1}\} \vdash e_i \allowbreak\land\allowbreak con(\{e_0,e_1,\cdots,e_{i}\}))$.

Intuitively, we say that $\ntr$ is correct if there is a sequence of events allowed by $N$
which would cause $\ntr$ to satisfy the event-driven consistent update condition.

\begin{definition}[Correct network trace]
\label{def:corr_trace}%
Let $\mathcal{S}$ be the set of all sequences allowed by $N$. 
Formally, a network trace $\ntr=(\lpt_0\lpt_1\cdots,T)$ is {\em correct} with respect to $N$ if
\begin{compactitem}
\item no $\lpt_j$ matches any $e \in \mathcal{E}$, and %
for all packet traces $\ntr{\downarrow}t$ where $t \in T$, 
we have $\ntr{\downarrow}t$ is in $\PTraces(g(\emptyset))$, or
\item there exists some $e_0 e_1 \cdots e_n \in \mathcal{S}$ such that
$\ntr$ is correct with respect to event-driven consistent update $(g(\emptyset) \xrightarrow{e_0} g(\{e_0\}) \xrightarrow{e_1} \cdots \xrightarrow{e_n} g(\{e_0,\cdots,e_n\}),\mathcal{E})$.
\end{compactitem}
\end{definition}

\paragraph*{Locality Restrictions for Incompatible Events.}
\label{sec:locrestr}
We now show how NESs can be used to impose reasonable locality
restrictions. A set of events $E$ is called {\em
  inconsistent} if and only if $con(E)$ does not hold.  We use the term {\em
  minimally-inconsistent} to describe inconsistent sets where all proper subsets are
not inconsistent.
An NES $N$ is called {\em locally-determined} if and only if for each of its
minimally-inconsistent sets $E$, all events in $E$ happen
at the same switch (i.e., $\exists \switch \forall e_i \in E : e_i = (\varphi_i,\switch,\port_i)$).
To illustrate the need for the locally-determined property, let us
consider the following two programs, $P_1$ and $P_2$.

\begin{compactitem}
\item
{\em Program $P_1$}: Recall that two events are inconsistent if either of
them can happen, but both cannot happen in the same
execution. Consider the topology shown in
Figure~\ref{fig:statefulIntro4nodes} and suppose this program requires
that $H_2$ and $H_4$ can both receive packets from $H_1$, but only the
first one to receive a packet is allowed to respond. There will be two
events $e_1$ and $e_2$, with $e_1$ the arrival of a packet from $H_1$
at $s_2$, and $e_2$ the arrival of a packet from $H_1$ at $s_4$. These
events are always enabled, but the set $\{e_1,e_2\}$ is not
consistent, i.e. $con(\{e_1,e_2\})$ does not hold. This models the
fact that at most one of the events can take effect.  These events
happen at different switches---making sure that {\em at most one} of
the events takes effect would necessitate information to be propagated
instantaneously ``at a distance.'' In implementations, this would
require using inefficient mechanisms (synchronization and/or
packet buffering). Our locality restriction is a clean
condition which ensures that the NES is efficiently implementable.

\item
{\em Program $P_2$}: Consider a different program
where $H_2$ can send traffic to one of the two hosts $H_1, H_3$ that
sends it a packet first. The two events (a packet from $H_1$ arriving
at $s_2$, and a packet from $H_3$ arriving at $s_2$) are still
inconsistent, but inconsistency does not cause problems in this case,
because both events happen at the same switch (the switch can
determine which one was first).
\end{compactitem}

\propose{
\noindent
In contrast to our approach,
\xmnote{\FiveStar}{Q5}%
an uncoordinated update approach improperly handles locality issues, mainly because it
does not guarantee {\em when} the
  configuration change occurs.
Consider the program $P_1$ again,
and consider the (likely) scenario where events $e_1$ and $e_2$ happen
nearly simultaneously.
  In an uncoordinated approach, this could result in switch $s_2$
  hearing about $e_1,e_2$ (in that order),
  and $s_4$ hearing about $e_2,e_1$ (in that order),
  meaning the two switches would have conflicting
  ideas of which event was ``first" (i.e. the switches would be in conflicting states,
  and this conflict cannot be resolved).
  In our implementation, we would require $e_1 $
  and $e_2$ to occur at the same switch, guaranteeing
  that we never see such a conflicting mix of states.
}%

\paragraph*{Strengthening Consistency.}
\renewcommand*{\proofname}{Proof Sketch}%
We now show that strengthening the consistency conditions imposed by
NESs would lead to lower availability, as it would lead to the need
for expensive synchronization, packet buffering, etc. First, we will
try to remove the locally-determined condition, and second, we will try to obtain a
strengthened consistency condition. The proof of the following theorem
is an adaptation of the proof of the CAP theorem~\cite{B00}, as
presented in~\cite{GL12}. The idea is that in asynchronous network
communication, a switch might need to wait arbitrarily long to hear
\xmnote{\FiveStar}{Q14}%
about an event.
\begin{lemma}
\label{lem:local}
In general, it is impossible to implement an NES that does not have 
the locally-determined condition while guaranteeing that switches
process each packet within an {a priori} given time bound.
\end{lemma}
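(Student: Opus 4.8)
The plan is to adapt the Gilbert--Lynch presentation of the CAP impossibility to our network-trace model. First I would fix a non-locally-determined NES that witnesses the impossibility --- program $P_1$ above is exactly such a witness --- and extract from one of its minimally-inconsistent sets $E$ two events $e_1=(\varphi_1,s,p_1)$ and $e_2=(\varphi_2,t,p_2)$ that occur at distinct switches $s \neq t$ (such a pair exists precisely because $E$ is not contained in a single switch). To reduce to a clean two-event choice even when $|E| > 2$, I would first drive the system into the context in which every event of $E \setminus \{e_1,e_2\}$ has already occurred; this set is a subset of a proper subset of $E$, hence consistent by minimality, and relative to it the remaining choice between additionally taking $e_1$ or $e_2$ behaves exactly like an inconsistent two-event choice, since $con$ holds after adding either one alone but fails after adding both (any set containing $E$ is inconsistent by downward closure of $con$). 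For $P_1$ the context is empty and $con(\{e_1\})$, $con(\{e_2\})$ hold while $con(\{e_1,e_2\})$ fails. Crucially, the two resulting target configurations $g(\{e_1\})$ and $g(\{e_2\})$ disagree observably on some packet $\pkt$ (in $P_1$, on whether the responder's return traffic is forwarded or dropped); I would fix such a $\pkt$.

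Next, assuming toward a contradiction an implementation that forwards or drops every packet within an \emph{a priori} bound $B$, I would construct three executions over this NES. In $\alpha_1$, a packet matching $\varphi_1$ arrives at $s{:}p_1$ (triggering $e_1$) and no packet ever matches $e_2$; by the correct-network-trace definition the only allowed event sequence selects $e_1$, so the disputed packet $\pkt$ must be handled as in $g(\{e_1\})$. Symmetrically, $\alpha_2$ triggers only $e_2$ and forces $g(\{e_2\})$ on $\pkt$. The decisive execution $\alpha$ triggers $e_1$ at $s$ and $e_2$ at $t$ at nearly the same time, and --- exploiting asynchrony --- I would delay every message crossing between the neighborhood of $s$ and that of $t$ by more than $B$.

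The heart of the argument is an indistinguishability claim built on the happens-before relation $\prec_\ntr$. Switch $s$ can learn that $e_2$ occurred only through a packet that was processed at $t$ after $e_2$ and subsequently reached $s$; the imposed delay prevents any such packet from arriving within $B$, and symmetrically for $t$. Hence, up to the instant each switch is forced by the bound $B$ to commit, $\alpha$ is locally indistinguishable at $s$ from $\alpha_1$ and at $t$ from $\alpha_2$. Therefore $s$ handles $\pkt$ as in $g(\{e_1\})$ while $t$ handles it as in $g(\{e_2\})$. But for $\alpha$ to be correct there must be a single sequence $S \in \mathcal{S}$ witnessing the event-driven consistent update condition, and since $\{e_1,e_2\}$ is inconsistent, $S$ can contain at most one of $e_1,e_2$ and thus fixes a single target configuration for $\pkt$; the two committed, incompatible local decisions make no such $S$ exist, a contradiction that proves the lemma.

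I expect the main obstacle to be making the indistinguishability step precise: specifying exactly what local information a switch's forwarding decision may depend on, and arguing that under the bound $B$ the decision must be taken strictly before any cross-switch packet --- equivalently, any happens-before edge linking the two events' switches --- can arrive. This is the asynchrony-versus-timeliness tension at the core of CAP: the bound $B$ compels a decision, while unbounded message delay withholds the information needed to make it consistently. The remaining ingredients --- existence of the two-switch inconsistent pair, the reduction from $|E|>2$, and the forcing of $g(\{e_1\})$ and $g(\{e_2\})$ in the single-event runs --- are routine once the definitions of correct network trace and $\prec_\ntr$ are in hand.
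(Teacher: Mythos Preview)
Your proposal is correct and follows essentially the same CAP-style argument as the paper: both isolate a minimally-inconsistent pair of events at distinct switches and argue that asynchrony prevents one switch from learning in bounded time whether the conflicting remote event has occurred, forcing either an incorrect commitment or unbounded buffering. Your version is considerably more fleshed out than the paper's brief sketch---you add the reduction from $|E|>2$ to a two-event choice and make the indistinguishability step explicit via three executions $\alpha_1,\alpha_2,\alpha$---but these are natural elaborations of the same idea rather than a different route.
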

\begin{proof}
Consider a simple NES, with event sets
$\emptyset$, $\{e_1\}$, $\{e_2\}$, and where $\{e_1\}$ and $\{e_2\}$ are both enabled from 
$\emptyset$. Assume that $con(\{e_1,e_2\})$ does not hold,
and that $e_1$ can happen at switch $A$ and $e_2$ 
can happen at switch $B$ (i.e., the locally-determined condition does not
hold). 

Because the communication is asynchronous, there is no a priori
bound on how long the communication between switches can take.
When a packet $p$ that matches $e_2$ arrives at the switch $B$, the switch must
distinguish the following two cases: 
\begin{inparaenum}
\item[(\#1)] event $e_1$ has occurred at $A$ (and thus $p$ does not cause
  $e_2$), or
\item[(\#2)] event $e_1$ has not occurred at $A$ (and thus $p$ causes 
  $e_2$).
\end{inparaenum}
No matter how long $B$ waits, it cannot distinguish these two cases,
and hence, when a packet that matches $e_2$ arrives to $B$, the switch
$B$ cannot
correctly decide whether to continue as if $e_2$ has happened. It has
the choice to either eventually decide (and risk the wrong decision),
or to buffer the packet that matches $e_2$.  
\end{proof}

\noindent We now ask whether we can strengthen the event-driven
consistent update definition. We define {\em strong update} as an
update $C_1 \xrightarrow{e} C_2$ such that immediately after 
$e$ occurred, the network processes all incoming packets in $C_2$. We
obtain the following lemma by the same reasoning as the previous one.
\begin{lemma}
\label{lem:packet}
In general, it is impossible to implement strong updates and guarantee that
switches process each packet within an {a priori} given time bound.
\end{lemma}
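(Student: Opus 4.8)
The plan is to reuse the asynchronous indistinguishability argument of Lemma~\ref{lem:local}, but now the tension arises from the \emph{timing} demand of a strong update rather than from inconsistent events. First I would fix a minimal instance: a single event $e$ that can occur at switch $A$, with $C_1 = g(\emptyset)$ and $C_2 = g(\{e\})$ chosen so that the two configurations treat some packet $p$ differently at a \emph{remote} switch $B \neq A$ (say, $C_1$ drops $p$ at $B$ while $C_2$ forwards it). Since $e$ is matched only by packets arriving at $A$, the packet $p$ at $B$ never triggers $e$ itself. The strong-update requirement demands that, immediately after $e$ occurs at $A$, every incoming packet---including $p$ at $B$---is processed in $C_2$.

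Next I would exhibit two executions that $B$ cannot tell apart. In execution $\sigma_1$, $e$ has not yet occurred at $A$ when $p$ arrives at $B$, so correctness (the first clause of Definition~\ref{def:corr_trace}) forces $B$ to process $p$ in $C_1$, i.e.\ drop it. In execution $\sigma_2$, $e$ occurs at $A$ just before $p$ arrives at $B$, so the strong-update clause forces $B$ to process $p$ in $C_2$, i.e.\ forward it. Because communication is asynchronous with no a priori delay bound, I would let the adversary delay every message carrying news of $e$ from $A$ to $B$ past the given time bound. Then $B$'s local history is identical in $\sigma_1$ and $\sigma_2$ up to---and beyond---the deadline by which it must act on $p$.

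The contradiction then follows exactly as in Lemma~\ref{lem:local}: within the required bound, $B$ sees the same inputs in both executions yet must produce opposite outcomes for $p$, so no strategy respecting the bound can be correct in both. Hence $B$ must either decide prematurely (risking the wrong configuration) or buffer $p$ indefinitely, violating the time bound. The essential point is that strong update grants $B$ \emph{no} grace period in which to learn about $e$, which is precisely the flexibility that event-driven consistent update retains via the happens-before relation: under the weaker condition, $B$ is allowed to keep processing $p$ in $C_1$ until it has actually heard about $e$.

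The step I expect to be the main obstacle is giving operational meaning to ``immediately after $e$ occurred'' in a model with no global clock: since $B$ has no way to observe the global instant of $e$, I would interpret the requirement as a constraint over every execution in which $e$ precedes $p$'s arrival in real (wall-clock) time, and then pin the argument on the adversarial message schedule that keeps $B$ uninformed of $e$ for longer than any fixed bound---mirroring the unbounded-waiting observation already used for Lemma~\ref{lem:local}.
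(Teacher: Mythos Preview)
Your proposal is correct and follows essentially the same approach as the paper: fix an event $e$ at switch $A$, pick a remote switch $B$ on which $C_1$ and $C_2$ differ, and reuse the asynchronous indistinguishability argument of Lemma~\ref{lem:local} to conclude that $B$ must either risk a wrong decision or buffer. The paper's own proof is a two-sentence sketch that simply names $A$ and $B$ and defers to ``the same argument as in the previous lemma,'' whereas you have spelled out the two indistinguishable executions and the adversarial delay explicitly; this extra detail is fine and does not change the underlying strategy.
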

\begin{proof}
Let $A$ be the switch where $e$ can happen, and let $B$ be a switch
on which the configurations $C_1, C_2$ differ. For $A$ and $B$,
the same argument as in the previous lemma shows that $B$ must
either risk the wrong decision on whether to process packets using
$C_1$ or $C_2$, or buffer packets.
\end{proof}

\renewcommand*{\proofname}{Proof}%

\section{Programming with Events}
\label{sec:model}
The correctness condition we described in the previous section
offers useful application-level guarantees to network programmers.
 At a high level, the programmer is freed from thinking about 
  interleavings of packets/events and responses to events
  (configuration updates). She can think in terms of our consistency
  model---each packet is processed in a single configuration,
  and packets entering ``after" an event will be processed in the new
  configuration (similar to causal consistency). An important consequence %
  is that the response to an event is immediate with respect to a
  given flow if the event is handled at that flow's ingress switch.

With this consistency model in mind, programmers can proceed by
specifying the desired event-driven program behavior using network
event structures. %
This section introduces an intuitive method for building NESs using
simple transition systems where nodes correspond to configurations and edges
correspond to events. We also present a network programming language
based on NetKAT that provides a compact notation for specifying both the
transition system and the configurations at the nodes.

\subsection{Event-Driven Transition Systems}
\label{subsec:ets}
\begin{definition}[Event-driven Transition System]
An {\em event-driven transition system (ETS)} is a graph $(V,D,v_0)$,
in which $V$ is a set of vertices, each labeled by a
configuration; $D \subseteq V \times V$ is a
set of edges, each labeled by an event $e$; 
and $v_0$ is the initial vertex. 
\end{definition}

\begin{figure}[t]
\centering
\includegraphics[trim = 0.0in 0.0in 0.0in 0.0in,
  clip,width=1.0\linewidth]{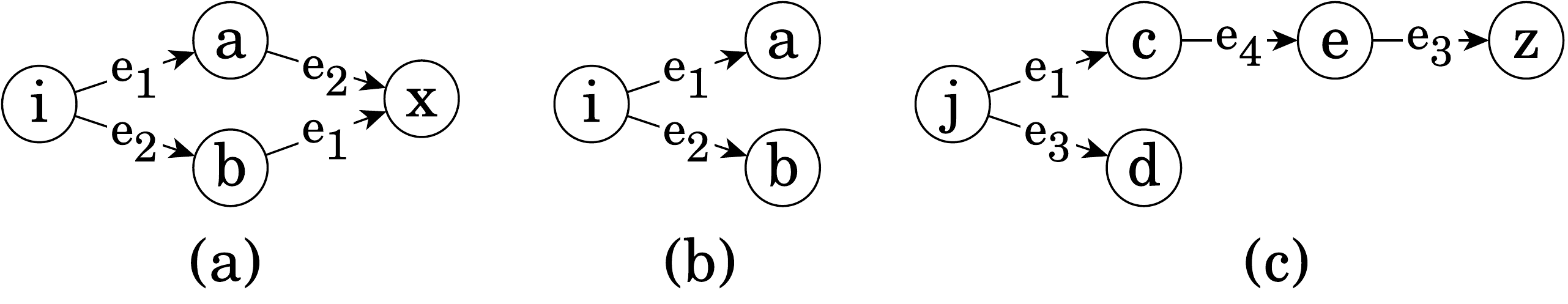}
\caption{Event-driven transition systems.}
\label{fig:ETSs}
\end{figure}

Consider the ETSs shown in Figure~\ref{fig:ETSs} (a-b). In (a), the
two events are intuitively {\em compatible}---they can happen in any order,
so we obtain a correct execution if both happen in different parts of
the network, and different switches can have a different view of the
order in which they happened. In (b), the two events are intuitively
{\em incompatible}---only one of them can happen in any particular
execution. Therefore, even if they happen nearly simultaneously, only
one of them should take an effect.
To implement this, we require the
locality restriction---we need to check whether the two events happen
at the same switch.
We thus need to distinguish between ETSs such as
(a) and (b) in Figure~\ref{fig:ETSs}, to determine where
locality restrictions must be imposed in the conversion from an ETS to
an NES.

\paragraph*{From ETSs to NESs.}
To convert an ETS to an NES, we first form the event sets
(Definition \ref{def:event_struct_config}) and then construct the enabling relation and consistency
predicate.
Given an ETS $T$, consider the
set $W(T)$ of sequences of events in $T$ from the initial node to any
vertex (including the empty sequence). For
each sequence $p \in W(T)$, let $E(p)$ be the set of events collected
along the sequence. The set $F(T) =\{ E(p) \mid p \in W(T) \}$ is our
candidate collection of event sets.
We now define conditions under which $F(T)$ gives rise to an NES.

\begin{compactenum}
\item We require that each set $E$ in $F(T)$ must correspond to
  exactly one network configuration. This holds if all paths in
  $W(T)$ corresponding to $E$ end at states labeled with the same
  configuration.

\item We require that $F(T)$ is {\em finite-complete}, i.e. for any
  sets $E_1,E_2,\cdots,E_n$ where each $E_i \in F(T)$, if there is a set
  $E' \in F(T)$ which contains
  every $E_i$ (an upper bound for the sets $E_i$), then the set $E_{lub} = \cup_i E_i$ (the least upper
  bound for the $E_i$) must also be in $F(T)$.  For example, consider the
  ETS in Figure~\ref{fig:ETSs}(c), which violates this condition since
  the event-sets $E_1=\{e_1\}$ and $E_2=\{e_3\}$ are both subsets of
  $\{e_1, e_4, e_3\}$, but there is no event-set of the form $E_1 \cup
  E_2 = \{e_1, e_3\}$.
\end{compactenum}

\noindent In~\cite{winskel1987event}, such a collection $F(T)$ is
called a {\em family of configurations}. Our condition (2) is
condition (i) in Theorem 1.1.9 in~\cite{winskel1987event} (conditions
(ii)-(iii) are satisfied by construction).

Given an ETS $T$, it is not difficult to confirm the above conditions
\xmnote{\FiveStar}{Q8}%
statically.
They can be checked using
straightforward graph algorithms, and any problematic vertices or
edges in $T$ can be indicated to the programmer.
The development of {\em efficient} checking algorithms is left
for future work.

We build the $con$ and $\vdash$ relations of an NES from
the family $F(T)$, using Theorem
1.1.12. of~\cite{winskel1987event}. Specifically, predicate $con$ can be defined by
declaring all sets in $F(T)$ as consistent, and for $\vdash$, we take the
smallest relation satisfying the constraints $\emptyset \vdash e \iff
\{e\} \in F(T)$ and $X' \vdash e \iff (X' \in con)\; \land ((X' \cup
\{e\}) \in F(T) \lor (\exists X \subseteq X' : X \vdash e))$.

After obtaining an NES, deciding whether it satisfies the locality
restriction is easy: we check whether the NES is locally determined
(see Section~\ref{sec:locrestr}), verifying for each
minimally-inconsistent set that the locality restriction holds.
Again, we leave the {\em efficiency} of this check for future work.

\paragraph{Loops in ETSs.}
If there are loops in the ETS $T$, the previous definition needs to be
slightly modified, because we need to ``rename" events encountered
multiple times in the same execution. This gives rise to an NES where
each event-set is finite, but the NES itself might be infinite (and
thus can only be computed lazily). If we have the ability to
store and communicate unbounded (but finite) event-sets in the network
runtime, then no modifications are needed to handle infinite NESs in
the implementation (which is described in Section \ref{sec:implementing}).
Otherwise, there are various correct overapproximations we could use,
such as computing the
strongly-connected components (SCCs) of the ETS, enforcing the
locality restriction on events in each (non-singleton) SCC, and
requiring the implementation to attach timestamps on occurrences of
events in those SCCs. For simplicity of the presentation, we will consider only loop-free ETSs
in this paper.

\subsection{Stateful NetKAT}

NetKAT \cite{anderson2014netkat} is a domain-specific language for
specifying network behavior. It has semantics based on Kleene
Algebra with Tests (KAT), and a sound and complete equational theory
that enables formal reasoning about programs. Operationally, a NetKAT
program behaves as a function which takes as input a single packet,
and uses tests, field-assignments, sequencing, and union to produce a
set of ``histories'' corresponding to the packet's traces.

Standard NetKAT does not support mutable {\em state}. Each packet is
processed in isolation using the function described by the
program. In other words, we can use a standard NetKAT program for specifying
individual network configurations, but not event-driven configuration
changes.  We describe a stateful variant of NetKAT which allows us to
compactly specify a {\em collection} of network configurations, as
well as the event-driven relationships between them (i.e. an ETS).  This variant
preserves the existing equational theory of the individual static configurations
(though it is not a KAT itself), but also allows packets to affect
processing of future packets via assignments to (and
tests of) a global {\em state}.
The syntax of Stateful NetKAT is shown in Figure
\ref{fig:st_netkat_syntax}.
A Stateful NetKAT program is a {\em command}, which can be:
\begin{compactitem}
\item a {\em test}, which is a formula over packet header fields
  (there are special fields $\kw{sw}$ and $\kw{pt}$ which test the
  switch- and port-location of the packet respectively),
\item a {\em field assignment} $x{\leftarrow}n$, which modifies the
  (numeric) value stored in a packet's field,
\item a {\em union} of commands $p + q$, which unions together the
  packet-processing behavior of commands $p$ and $q$,
\item a command {\em sequence} $p\,;q$, which runs packet-processing
  program $q$ on the result of $p$,
\item an {\em iteration} $p*$, which is equivalent to $\kw{true} + p + (p\,;
  p) + (p\,; p\,; p) + \cdots$,
\item or a {\em link definition}
  $(n_1{:}m_1){\rightarrowtriangle}(n_2{:}m_2)$, which forwards a
  packet from port $m_1$ at switch $n_1$ across a physical link to port $m_2$ at switch
  $n_2$.
\end{compactitem}

\plstxset{or skip=2.5pt}
\begin{figure}[t]
\small
\centering
\begin{plstx}
*(packet field name): f [\in] \mathit{Field} \\
*(numeric value): n [\in] \mathbb{N} \\
(modifiable field): x ::= f | \kw{pt} \\
(test): a,b ::=
\kw{true} | \kw{false} | x=n | \kw{sw}{=}n | \kw{state}(n) = n | a\lor b | a\land b | \lnot a \\
(command): p,q ::=
a | x\leftarrow n | p+q | p \;;\, q | p *
| (n{:}n) \rightarrowtriangle (n{:}n)
| (n{:}n) \rightarrowtriangle (n{:}n) \rightarrowtriangle \langle \kw{state}(n) \leftarrow n \rangle \\
\set{or skip=1pt,or=\,}
\set{or=\vert}
\end{plstx}
\caption{Stateful NetKAT: syntax.}
\label{fig:st_netkat_syntax}
\end{figure}

\noindent
The functionality described above is also provided by standard NetKAT \cite{smolka2015}.
The key distinguishing feature of our {\em Stateful} NetKAT is a special
global vector-valued variable called $\kw{state}$, which allows the
programmer to represent a collection of NetKAT programs. The function
shown in Figure \ref{fig:state_static_semantics} gives the standard
NetKAT program $\llbracket p \rrbracket_{\vec{k}}$ corresponding to
each value $\vec{k}$ of the state vector (for conciseness, we only
show the non-trivial cases).  We can use the NetKAT compiler
\cite{smolka2015} to generate forwarding tables (i.e. configurations)
corresponding to these, which we denote $C(\llbracket p
\rrbracket_{\vec{k}})$.

\begin{figure}[t]
\footnotesize
\begin{displaymath}
\begin{array}{rcl}
\llbracket \kw{state}(m){=}n \rrbracket_{\vec{k}} & \triangleq & 
\begin{cases}
\llbracket \kw{true} \rrbracket_{\vec{k}} & \text{if}\; \vec{k}(m){=}n \\
\llbracket \kw{false} \rrbracket_{\vec{k}} & \text{otherwise}
\end{cases} \vspace{2pt} \\
\llbracket (a{:}b) \rightarrowtriangle (c{:}d) \rightarrowtriangle \langle \kw{state}(m) \leftarrow n \rangle\rrbracket_{\vec{k}} & \triangleq &
\llbracket (a{:}b) \rightarrowtriangle (c{:}d) \rrbracket_{\vec{k}} \\
\end{array}
\end{displaymath}
\caption{Stateful NetKAT: extracting NetKAT Program (state $\vec{k}$).}
\label{fig:state_static_semantics}
\end{figure}

\begin{figure*}[t]
\small
\centering
\fbox{
\begin{minipage}{0.98\linewidth}
\begin{tabular}{l | l}
\begin{minipage}[b]{0.49\linewidth}
\vspace{-10pt}
\begin{displaymath}
\begin{array}{rcl}
\llrrparen{f \oeq n}_{\vec{k}}\; \varphi & \triangleq & (\{ \}, \{\varphi \land f{\oeq}n\}) \\
\llrrparen{\kw{sw} \oeq n}_{\vec{k}}\; \varphi & \triangleq & \llrrparen{\kw{true}}_{\vec{k}}\; \varphi \\
\llrrparen{\kw{port} \oeq n}_{\vec{k}}\; \varphi & \triangleq & \llrrparen{\kw{true}}_{\vec{k}}\; \varphi \vspace{1pt} \\

\llrrparen{state(m) \oeq n}_{\vec{k}}\; \varphi & \triangleq & 
\begin{cases}
      \llrrparen{\kw{true}}_{\vec{k}}\; \varphi  & \text{if}\; \vec{k}(m){\oeq}n \\
      \llrrparen{\kw{false}}_{\vec{k}}\; \varphi  & \text{otherwise} \\
\end{cases} \vspace{1pt} \\

\llrrparen{f \leftarrow n}_{\vec{k}}\; \varphi & \triangleq & (\{ \}, \{(\exists f : \varphi) \land f{=}n \}) \\
\llrrparen{p + q}_{\vec{k}}\; \varphi & \triangleq &
(\llrrparen{p}_{\vec{k}}\; \varphi) \sqcup
(\llrrparen{q}_{\vec{k}}\; \varphi)
\\

\llrrparen{p \;;\, q}_{\vec{k}} \;\varphi & \triangleq & (\llrrparen{p}_{\vec{k}} \sqcdot \llrrparen{q}_{\vec{k}})\; \varphi \\
\llrrparen{p*}_{\vec{k}} \;\varphi & \triangleq & \bigsqcup_j F^j_p\; (\varphi ,\vec{k})\\

\end{array}
\end{displaymath}
\end{minipage} &
\begin{minipage}{0.49\linewidth}
\begin{displaymath}
\begin{array}{rcl}
\llrrparen{a \land b}_{\vec{k}}\; \varphi & \triangleq & \llrrparen{a \;;\, b}_{\vec{k}}\; \varphi \\
\llrrparen{a \lor b}_{\vec{k}}\; \varphi & \triangleq & \llrrparen{a + b}_{\vec{k}}\; \varphi \\

\llrrparen{\kw{true}}_{\vec{k}}\; \varphi & \triangleq & (\{ \}, \{ \varphi \}) \\
\llrrparen{\kw{false}}_{\vec{k}}\; \varphi & \triangleq & (\{ \}, \{ \}) \\

\llrrparen{\lnot \kw{true}}_{\vec{k}}\; \varphi & \triangleq & \llrrparen{\kw{false}}_{\vec{k}}\; \varphi \\
\llrrparen{\lnot \kw{false}}_{\vec{k}}\; \varphi & \triangleq & \llrrparen{\kw{true}}_{\vec{k}}\; \varphi \\
\llrrparen{\lnot (v \oeq n)}_{\vec{k}}\; \varphi & \triangleq & \llrrparen{v \oneq n}_{\vec{k}}\; \varphi \\
\llrrparen{\lnot \lnot a}_{\vec{k}}\; \varphi & \triangleq & \llrrparen{a}_{\vec{k}}\; \varphi \\
\llrrparen{\lnot (a \land b)}_{\vec{k}}\; \varphi & \triangleq & \llrrparen{\lnot a \lor \lnot b}_{\vec{k}}\; \varphi \\
\llrrparen{\lnot (a \lor b)}_{\vec{k}}\; \varphi & \triangleq & \llrrparen{\lnot a \land \lnot b}_{\vec{k}}\; \varphi \\

\end{array}
\end{displaymath}
\end{minipage}
\end{tabular}

\medskip
\hrule
\smallskip

\begin{minipage}{0.98\linewidth}
\begin{displaymath}
\begin{array}{rcl}
\llrrparen{(s_1{:}p_1) \rightarrowtriangle (s_2{:}p_2)}_{\vec{k}}\; \varphi & \triangleq & (\{ \}, \{ \varphi \}) \\
\llrrparen{(s_1{:}p_1) \rightarrowtriangle (s_2{:}p_2) \rightarrowtriangle \langle\kw{state}(m)\leftarrow n\rangle}_{\vec{k}}\; \varphi & \triangleq & (\{ (\vec{k},(\varphi,s_2,p_2),\vec{k}[m \mapsto n]) \}, \{ \varphi \}) \\
\end{array}
\end{displaymath}
\end{minipage}
\end{minipage}
}
\caption{Stateful NetKAT: extracting event-edges from state $\vec{k}$.}
\label{fig:state_semantics_conf}
\end{figure*}

\subsection{Converting Stateful NetKAT Programs to ETSs}

Now that we have the $\llbracket \cdot \rrbracket_{\vec{k}}$ function 
to extract the static configurations (NetKAT programs) corresponding to the vertices of an ETS,
we define another function $\llrrparen{\cdot}_{\vec{k}}$,
which produces the event-edges
(Figure \ref{fig:state_semantics_conf}).
This collects (using parameter $\varphi$) the conjunction of all tests seen up to
a given program location, and records a corresponding event-edge when a state assignment
command is encountered.
The function returns a tuple $(D,P)$, where $D$ is a set of event-edges, and $P$
is a set of updated conjunctions of tests.
In the figure, the $\sqcup$ operator denotes pointwise union of tuples,
i.e. $(A_1, B_1, \cdots) \sqcup (A_2, B_2, \cdots) = (A_1 \cup A_2, B_1 \cup B_2, \cdots)$.
The $\sqcdot$ operator denotes (pointwise) Kleisli composition, i.e. 
$(f \sqcdot g) \triangleq \bigsqcup\, \{ g\; y : y \in f\; x\}$, and function $F$ is as follows.
\[\begin{array}{rcl}
F^0_p\; (\varphi,\vec{k}) & \triangleq & (\{ \}, \{ \varphi \}) \\
F^{j+1}_p\; (\varphi,\vec{k}) & \triangleq & (\llrrparen{p}_{\vec{k}} \sqcdot F^j_p)\; \varphi
\end{array}\]

\noindent
The symbol variable $\oeq$ is either equality ``$=$" or inequality ``$\not=$",
and $\oneq$ is the opposite symbol with respect to $\oeq$.
Given any conjunction $\varphi$ and a header field $f$, the formula 
$(\exists f : \varphi)$ strips all predicates of the form $(f \oeq n)$ from $\varphi$.

Using $\mathit{fst}$ to denote obtaining the first element of a tuple, we can now produce the
event-driven transition system for a Stateful NetKAT program $p$ with the initial state $\vec{k}_0$:

\[
\begin{array}{rl}\mathit{ETS}(p) \triangleq & (V,D,v_0)\\
\multicolumn{2}{l}{\quad \text{where}~V \triangleq \bigcup_{\vec{k}} \{(\vec{k},C(\llbracket p \rrbracket_{\vec{k}}))\}}\\
\multicolumn{2}{l}{\qquad \text{and}~D \triangleq \mathit{fst}\,\big(\bigsqcup_{\vec{k}} \llrrparen{p}_{\vec{k}}\; \kw{true}\big)}\\
\multicolumn{2}{l}{\qquad \text{and}~v_0 \triangleq (\vec{k}_0,C(\llbracket p \rrbracket_{\vec{k}_0}))}\\
\end{array}
\]

\section{Implementing Event-Driven Programs}
\label{sec:implementing}

Next, we show one method of implementing NESs in a real SDN, and we prove that
this approach is correct---i.e., all traces followed by actual packets
in the network are correct with respect to Definition \ref{def:corr_trace} in Section~\ref{sec:correct_trace}.
At a high level, the basic idea of our implementation strategy can be
understood as follows. We assume that the switches in the network
provide mutable state that can be read and written as packets are
processed. Given an NES, we assign a tag to each event-set and compile
to a collection of configurations whose rules are ``guarded" by the
appropriate tags. %
We then add logic that (i) updates the mutable state to
record local events, (ii) stamps incoming packets with the tag for
the current event-set upon ingress, and (iii) reads the tags carried
by packets, and updates the event-set at subsequent switches.

\subsection{Implementation Building Blocks}
\label{subsec:building_blocks}

\paragraph{Static Configurations.}
The NES contains a set of network configurations that need to be
installed as flow tables on switches. In addition, we must
be able to transition to a new configuration in response to a local event. 
We do this {\em proactively}, installing all of the needed rules on
switches in advance, with each rule guarded by its configuration's ID.
This has a disadvantage of being less efficient in terms of rule-space
usage, but an advantage of allowing quick configuration
changes. In Section \ref{sec:opt}, we discuss an approach for
addressing the space-usage issue by sharing rules between configurations.
\propose{Our implementation strategy
encodes each event-set in the NES
\xmnote{\FiveStar}{Q11}%
as an integer, so a single unused packet header field (or single
register on switches) can be used. This keeps the overhead
low, even for very large programs.}

\implTraceFig%

\paragraph{Stateful Switches.}
Emerging data-plane  languages such as P4 \cite{bosshart2014p4} and
OpenState~\cite{bianchi2014openstate} are 
beginning to feature advanced functionality such as customizable
parsing, stateful memories, etc.
We assume that our switches support
\propose{(1) modifying a local register (e.g. an integer on a switch)
appropriately upon receipt of a packet,
and (2) making packet forwarding decisions based on the value of a register.}
This allows each switch to
maintain a local view of the global state. Specifically, the register
records the set of events the device knows have occurred. At any time,
the device can receive a packet (from the controller or another
device) informing it of new event occurrences, which are ``unioned" into
the local register (by performing a table lookup based on integer values).
\propose{Currently, P4 data planes support this type of functionality.}

\propose{We also assume that
\xmnote{\FiveStar}{Q13}%
the switch atomically processes each packet
in the order in which it was received. Such ``atomic" switch operations
are proposed by the ``Packet Transactions" P4 extension \cite{sivaraman2015}.
Because the P4 switch platform is attracting considerable attention
(even spawning its own highly-attended workshop),
we feel that our assumptions are realistic for the current
state-of-the-art in regards to switches.}

\paragraph{Packet Processing.}
Each packet entering the network is admitted from a host to a port
on an edge switch.  The configuration ID $j$ corresponding to the device's
view of the global state is assigned to the packet's {\em version
  number} field. The packet will processed only by $j$-guarded rules
throughout its lifetime.
Packets also carry a {\em digest} encoding the set of events the
packet has heard about so
far (i.e. the packet's view of the global state). If the packet passes
through a device which has heard about additional events, the packet's
digest is updated accordingly. Similarly, if the packet's digest
contains events not yet heard about by the device, the latter adds them to
its view of the state.
When a packet triggers an event, that event is immediately added to
the packet's digest, as well as to the state of the device where the
event was detected. The controller is then notified about the
event. Optionally (as an optimization), the controller
can periodically broadcast its view
of the global state to all switches, in order to speed up
dissemination of the state.

\subsection{Operational Model}
\label{subsec:model}

We formalize the above via operational semantics for the global
behavior of the network as it executes an NES.
Each state in Figure \ref{fig:actual_model} has the form $(Q,R,S)$,
with a controller queue $Q$, a controller $R$, and set of switches
$S$.  Both the controller queue and controller are a set of events,
and initially, $R{=}Q{=}\emptyset$.
Each switch $s\in S$ is a tuple $(n,qm_{in},E,qm_{out})$, where $n$ is
the switch ID, $qm_{in}, qm_{out}$ are the input/output queue maps
(mapping port IDs to packet queues).
Map updates are denoted $qm[m \mapsto \pkts]$.
The event-set $E$ represents a switch's view of what events have
occurred.
A packet's digest is denoted $\pkt.digest$, and the configuration
corresponding to its version number is denoted $\pkt.C$.
The rules in Figure \ref{fig:actual_model} can be summarized as follows.
\begin{compactitem}
\item {\sc In/Out}: move a packet between a host and edge port.
\item {\sc Switch}: process a packet by first adding new events from the
  packet's digest to the local state, then checking if the packet's
  arrival matches an event $e$ enabled by the NES and updating the
  state and packet digest if so, and finally updating the digest with
  other local events.
\item {\sc Link}: move a packet across a physical link.
\item {\sc CtrlRecv}: bring an event from the controller
  queue into the controller.
\item {\sc CtrlSend}: update the local state of the switches.
\end{compactitem}

\subsection{Correctness of the Implementation}
\label{sec:correctness}

We now prove the correctness of our implementation. Formally, we show
that the operational semantics generates correct traces, as defined in Section~\ref{sec:correct_trace}.

\renewcommand*{\proofname}{Proof Sketch}%

\begin{lemma}[Global Consistency]
\label{lem:consistent}
Given a locally-determined network event structure $N$, %
for an execution of the implementation
$(Q_1,R_1,S_1) (Q_2,R_2,S_2) \cdots (Q_m,R_m,\allowbreak S_m)$, the event-set
$Q_i \cup R_i$ is 
consistent for all $1 \leq i \leq m$.
\end{lemma}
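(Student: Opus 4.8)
The plan is to prove the invariant by induction on the length $m$ of the execution. The base case is immediate, since $Q_1 = R_1 = \emptyset$ and $con(\emptyset)$ holds. For the inductive step I would examine each transition rule of Figure~\ref{fig:actual_model} and track its effect on $Q \cup R$. The rules \textsc{In}, \textsc{Out}, and \textsc{Link} leave $Q$ and $R$ untouched, and \textsc{CtrlRecv} merely moves an event from $Q$ into $R$, so in all of these cases $Q_{i+1} \cup R_{i+1} = Q_i \cup R_i$ and consistency is preserved by the hypothesis; \textsc{CtrlSend} likewise leaves $Q$ and $R$ unchanged (it only enlarges a switch's local state). Hence the sole rule that can grow $Q \cup R$ is \textsc{Switch}, which adjoins the freshly-triggered set $E'$, and the whole argument reduces to showing that $(Q_i \cup R_i) \cup E'$ is again consistent.

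To analyze \textsc{Switch}, I would carry two auxiliary invariants alongside consistency of $Q \cup R$. First, that every switch state $E$ and every packet digest is a consistent subset of $Q \cup R$. Second, and crucially, that for each switch $n$ its local state $E$ contains all events located at $n$ that currently occur in $Q \cup R$, i.e. $\{\, e \in Q \cup R : e = (\varphi, n, m) \,\} \subseteq E$. This second invariant rests on two observations about the model: (i) an event $e = (\varphi, n, m)$ can be matched, hence triggered, only by a packet arriving at location $n{:}m$, so every event at switch $n$ appearing in $Q \cup R$ was produced by a \textsc{Switch} firing \emph{at} $n$; and (ii) \textsc{Switch} adds $E'$ to the firing switch's state at the same moment it adds $E'$ to $Q$, switch states only grow, and (by the atomic, in-order processing assumption) they stay current, so once an event at $n$ enters $Q \cup R$ it is permanently recorded in $n$'s state.

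Given these invariants, the inductive step for \textsc{Switch} at switch $n$ processing packet $\pkt$ proceeds as follows. Suppose toward a contradiction that $(Q_i \cup R_i) \cup E'$ is inconsistent, and pick a minimally-inconsistent witness $F$. Since $Q_i \cup R_i$ is consistent, $F$ must contain some event of $E'$, which is located at $n$. Because $N$ is locally-determined, \emph{every} event of $F$ is then located at $n$. By the second invariant, $F \cap (Q_i \cup R_i) \subseteq E$, so $F \subseteq E \cup \pkt.digest \cup (F \cap E')$. If $F$ meets $E'$ in exactly one event $e$, then $F \subseteq E \cup \pkt.digest \cup \{e\}$, which is consistent by the very definition of $E'$ (each triggered event satisfies $con(E \cup \pkt.digest \cup \{e\})$), contradicting the inconsistency of $F$.

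The main obstacle is the remaining case, where $F$ meets $E'$ in two or more events $e_1, e_2, \dots$. These would be mutually incompatible events \emph{all matched by the single packet} $\pkt$ at location $n{:}m$, so their guards $\varphi_1, \varphi_2, \dots$ would be jointly satisfiable; note that the per-event filter in $E'$ does not by itself exclude this, since each singleton extension $\{e_j\} \cup (E \cup \pkt.digest)$ may well be consistent. I would close this case using the intended well-formedness of NESs obtained from ETSs: incompatible events model \emph{mutually-exclusive} alternatives, whose guards at a common location are jointly unsatisfiable, so a single packet matches at most one member of any minimally-inconsistent set, forcing $E'$ to be consistent with $E \cup \pkt.digest$. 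This is exactly the work done by the locally-determined hypothesis, which confines every potential inconsistency to one switch, where packets are handled atomically and the switch holds a complete, up-to-date local view (the second invariant); the per-event check $con(E \cup \pkt.digest \cup \{e\})$ then suffices to prevent any new global inconsistency. With this case settled, $(Q_i \cup R_i) \cup E'$ is consistent, completing the induction.
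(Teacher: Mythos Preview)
Your approach is essentially the paper's: induction on the execution length, observing that only \textsc{Switch} can enlarge $Q\cup R$, taking a minimally-inconsistent witness, and using the locally-determined hypothesis to localize it to a single switch where the per-event $con$ check in \textsc{Switch} rules it out. Your auxiliary invariant that a switch's local state already contains every event at that switch occurring in $Q\cup R$ is exactly what the paper gestures at with ``the \textsc{Switch} rule ensures that multiple events from $Y$ could not have been sent to the controller.''

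The one substantive divergence is how you treat the set $E'$. The paper's sketch quietly argues as though \textsc{Switch} adds a \emph{single} event $e$ at a time, so its minimally-inconsistent $Y$ meets the new material in at most one point and the singleton $con$ check suffices directly. You instead keep $E'$ as a set and then confront the case $|F\cap E'|\ge 2$, which you close by invoking an extra well-formedness assumption (that incompatible events at the same location have jointly unsatisfiable guards, as inherited from the ETS construction). That assumption is plausible in the paper's pipeline but is not part of the lemma's hypotheses; the paper simply does not surface this case. So your argument is more explicit and arguably more honest about the gap, at the cost of importing an assumption the paper leaves implicit, while the paper's sketch is terser but leans on the one-event-at-a-time simplification without justification.
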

\begin{proof}

We first show that if an {\em inconsistent} set $Y$ where $|Y| > 1$
satisfies the locality restriction (i.e. all of its events are
handled at the same switch), then $Y \subseteq R_i \cup Q_i$ is
not possible for any $i$ (the {\sc Switch} rule ensures that multiple events from $Y$ could not have been 
sent to the controller).

We proceed by induction over $m$, the trace length,
noting that the base case $Q_0 \cup R_0 = \emptyset$ is consistent.
Assume that the implementation
adds an $e$ (via {\sc Switch}) to some consistent event-set $Q_m \cup R_m$, producing an inconsistent set. We look
at the minimally-inconsistent set $Y \subseteq (Q_m \cup R_m \cup\{e\})$, and notice that the
locality restriction requires all events in $Y$ to be detected at the same switch,
so by the previous paragraph, we must have $|Y| \leq 1$.
This generates a contradiction, since it would mean that either $Y =\{e_0\}$
or $Y \subseteq Q_m \cup R_m$, either of which would make $Y$ consistent.
\end{proof}

\paragraph{Traces of the Implementation.}
Note that we can readily produce the network trace (Section \ref{sec:consistency})
that corresponds to an implementation trace, since a single packet $\pkt$ is processed at each step of
Figure \ref{fig:actual_model}.
We now present the main result of this
section---executions of the implementation correspond to correct network
  traces (Definition \ref{def:corr_trace}).

\begin{theorem}[Implementation Correctness]
\label{thm:correct}
For an NES $N$, and an execution
$(Q_1,R_1,S_1) (Q_2,R_2,S_2) \cdots (Q_m,\allowbreak R_m,\allowbreak S_m)$ of the implementation,
the corresponding network trace $\ntr$ is correct with respect to $N$.
\end{theorem}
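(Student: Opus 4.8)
The plan is to turn the given implementation execution into a network trace, read off a witnessing event sequence, and then check the clauses of Definition~\ref{def:corr_trace} one by one. First I would use the observation (already noted in the ``Traces of the Implementation'' paragraph) that each step of Figure~\ref{fig:actual_model} touches a single packet, so the execution $(Q_1,R_1,S_1)\cdots(Q_m,R_m,S_m)$ induces a located-packet sequence $\lpt_0\lpt_1\cdots$ together with a tree family $T$ (each admitted packet and the packets it spawns via the \textsc{Switch} rule form a tree), giving a network trace $\ntr=(\lpt_0\lpt_1\cdots,T)$. I would then define the witnessing sequence $S=e_0e_1\cdots e_n$ by listing the events in the order in which they are \emph{first} detected by the \textsc{Switch} rule (the steps where an event $e$ is added to some switch's $E$ and placed on $Q$). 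Setting $k_i$ to the index of that first detection gives the candidate $FO(\ntr,U)$, where $U=g(\emptyset)\xrightarrow{e_0}\cdots\xrightarrow{e_n}g(\{e_0,\dots,e_n\})$.

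Next I would discharge the side-conditions that make $S$ legal and $FO$ well-defined. That $S$ is allowed by $N$ follows from the \textsc{Switch} premise $(E\cup\pkt.\mathit{digest})\vdash e$ together with monotonicity of $\vdash$: at the first detection of $e_i$ only earlier-detected events are known, so $\{e_0,\dots,e_{i-1}\}\vdash e_i$; consistency of every prefix is exactly the switch-level strengthening of Lemma~\ref{lem:consistent}. The three $FO$ conditions are then routine: the $k_i$ are increasing and each is a first match by construction, and the third---that some packet trace through $\lpt_{k_i}$ lies in $\PTraces(C_i)$---holds because the triggering packet is stamped at its ingress with a view not yet containing $e_i$, which by the causal invariant below matches $C_i$ on that trace.

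The heart of the argument is a causal invariant tying the operational state to the abstract order $\prec_\ntr$. I would prove by induction on the step index that (i) each switch's $E$ and each packet's digest only grow, and (ii) at the moment a located packet $\lpt_j$ is processed, the events known to that switch together with the packet's digest are exactly the events whose first occurrence $\lpt_k$ satisfies $\lpt_k\prec_\ntr\lpt_j$. The two clauses of the happens-before definition correspond precisely to the two propagation mechanisms: the \textsc{Link} and \textsc{Switch} rules carry a digest along a single packet (the per-packet order), while atomic in-order processing at a switch makes its $E$ monotone along that switch's total order. The \textsc{CtrlSend}/\textsc{CtrlRecv} rules only ever move events that have \emph{already} been detected, so they can enlarge a switch's knowledge but never before the event's first occurrence---this is what keeps the ``not too early'' clause intact. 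From this invariant the three correctness clauses fall out: the version number is assigned once at ingress and guards every rule the packet meets, so each packet trace lies in a \emph{single} $\PTraces(g(E_{\mathrm{ing}}))$; if all of a packet's locations precede $\lpt_{k_i}$ then its ingress precedes the first appearance of $e_i$, so $e_i\notin E_{\mathrm{ing}}$ and the config is among $C_0,\dots,C_i$; and if $\lpt_{k_i}$ precedes all of the packet's locations the invariant forces $e_i\in E_{\mathrm{ing}}$, placing the config among $C_{i+1},\dots,C_{n+1}$.

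I expect the invariant of the previous paragraph to be the main obstacle, for two reasons. First, one must show the operational digests track $\prec_\ntr$ \emph{exactly} rather than merely soundly, including the interaction with the controller side-channel. Second, and more subtly, a switch's local view need not be a prefix of the global sequence $S$ when two compatible events are learned in different orders at different switches; to reconcile this with Definition~\ref{def:corr_trace}, which fixes one sequence $S$, I would use that the locally-determined hypothesis (via Lemma~\ref{lem:consistent}) keeps the global event-set consistent, and that for any \emph{single} packet trace the events it has heard about are linearly ordered by $\prec_\ntr$, so on that trace $g(E_{\mathrm{ing}})$ agrees with a prefix configuration of $S$. Making this last reconciliation precise---rather than hand-waving the concurrent-event case---is the step I would spend the most care on.
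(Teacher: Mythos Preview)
Your proposal is correct and follows essentially the same approach as the paper's proof sketch: induction on the execution length, with the two key facts being (1) consistency of event-sets via Lemma~\ref{lem:consistent} and (2) the causal invariant that the local event-set stamped onto a packet at ingress is exactly the set of events that happened-before its arrival. Your write-up is considerably more detailed than the paper's sketch---in particular your explicit construction of the witnessing sequence $S$ and your discussion of the concurrent-event reconciliation are points the paper leaves implicit---but the skeleton is the same.
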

\begin{proof}
The proof is by induction over the length $m$ of the execution.  In
the induction step, we show that (1) the {\sc Switch} rule can only
produce consistent event-sets (this follows directly from Lemma
\ref{lem:consistent}), and (2) when the {\sc In} rule tags a packet
$\pkt$ based on the local event-set $E$, that $E$ consists of exactly
the events that happened before $\pkt$ arrived (as ordered by the
happens-before relation).
\end{proof}

\renewcommand*{\proofname}{Proof}%

\section{Implementation and Evaluation}
\label{sec:eval}

We built a full-featured prototype implementation in OCaml.
\begin{compactitem}
\item We implemented the compiler described in Section \ref{sec:model}. This tool
accepts a Stateful NetKAT program, and produces the corresponding NES, with a standard
NetKAT program representing the configuration at each node.
We interface with Frenetic's NetKAT compiler to produce flow-table rules for each of these NetKAT programs.
\item We modified the OpenFlow 1.0 reference implementation %
to support the custom switch/controller needed to realize the runtime described
in Section \ref{sec:implementing}.
\item We built tools to automatically generate custom Mininet
scripts
to bring up the programmer-specified network topology, using switches/controller running
the compiled NES. We can then realistically simulate the whole system
using real network traffic.
\end{compactitem}

\paragraph{Research Questions.} To evaluate our approach, we wanted to obtain answers to the following questions.
\begin{compactenum}
\item How useful is our approach? Does it allow programmers to easily write
real-world network programs, and get the behavior they want?
\item What is the performance of our tools (compiler, etc.)?
\item How much does our correctness guarantee help? For instance, how do the
running network programs compare with \propose{uncoordinated} event-driven strategies?
\item How efficient are the implementations generated by our approach? For
instance, what about message overhead? State-change convergence time? Number of rules used?
\end{compactenum}

\noindent
We address \#1-3 through case studies on
real-world programming examples, and \#4 through quantitative performance measurements on simple
automatically-generated programs.
For the experiments, we assume that the programmer has first confirmed that the
program satisfies the conditions allowing proper compilation to an NES,
and we assume that the ETS has no loops.
Our tool could be modified to perform these checks
\xmnote{\FiveStar}{Q15}%
via basic algorithms operating on the ETS, but they
have not yet been implemented in the current prototype
(as mentioned in Section \ref{subsec:ets}, developing efficient algorithms
for these checks is left for future work).
Our experimental platform was an Ubuntu machine with 20GB
RAM and a quad-core Intel i5-4570 CPU (3.2 GHz).

To choose a representative set of realistic examples,
\xmnote{\FiveStar}{Q16}%
we first studied the examples addressed in other recent stateful network
programming approaches, such as
SNAP \cite{arashloo2015snap},
FlowLog \cite{nelson2014tierless},
Kinetic \cite{kim2015kinetic},
NetEgg \cite{yuan2015netegg}, and
FAST~\cite{moshref2014flow}, and
categorized them into three main groups:

\begin{compactitem}
\item {\em Protocols/Security}:
accessing streaming media across subnets, %
ARP proxy, %
{\bf firewall with authentication}, %
FTP monitoring, %
{\bf MAC learning}, %
{\bf stateful firewall}, %
TCP reassembly, %
Virtual Machine (VM) provisioning.
\item {\em Measurement/Performance}:
heavy hitter detection, %
{\bf bandwidth cap management (uCap)}, %
connection affinity in load balancing, %
counting domains sharing the same IP address, %
counting IP addresses under the same domain, %
elephant flows detection, %
link failure recovery, %
load balancing, %
network information base (NIB), %
QoS in multimedia streaming, %
rate limiting, %
sampling based on flow size, %
Snort flowbits, %
super spreader detection, %
tracking flow-size distributions.
\item {\em Monitoring/Filtering}:
application detection, %
DNS amplification mitigation, %
DNS TTL change tracking, %
DNS tunnel detection, %
{\bf intrusion detection system (IDS)}, %
optimistic ACK attack detection, %
phishing/spam detection, %
selective packet dropping, %
sidejack attack detection, %
stolen laptop detection, %
SYN flood detection, %
UDP flood mitigation, %
walled garden.
\end{compactitem}

\noindent
As we will see in the following section, our current prototype system is best suited for writing
programs such as the ones in the {\em Protocols/Security} category, since
some of the {\em Measurement/Performance}
programs require timers and/or integer counters,
and some of the {\em Monitoring/Filtering} programs require complex pattern matching of
(and table lookups based on)
sequences of packets---functionality which we do not (yet) natively support,
Thus, we have selected three examples from the first category, and one from each of the
latter two, corresponding to the boldface applications in the list.
We believe that these applications are representative of the basic types of behaviors seen in
the other listed applications.

\subsection{Case Studies}
\label{subsec:case}

\begin{figure*}[t]
\centering
\begin{minipage}{.33\textwidth}
\centerline{\includegraphics[trim = 0.0in 0.0in 0.0in 0.0in, clip,height=0.80in]{stateful}}
\centerline{(a, d)}
\end{minipage}\vrule\begin{minipage}{.33\textwidth}
\centerline{\includegraphics[trim = 0.0in 0.0in 0.0in 0.0in, clip,height=.80in]{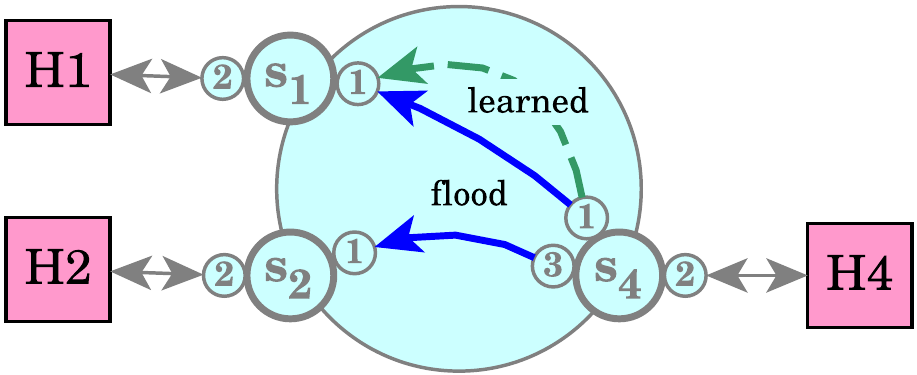}}
\centerline{(b)}
\end{minipage}\vrule\begin{minipage}{.33\textwidth}
\centerline{\includegraphics[trim = 0.0in 0.0in 0.0in 0.0in, clip,height=.80in]{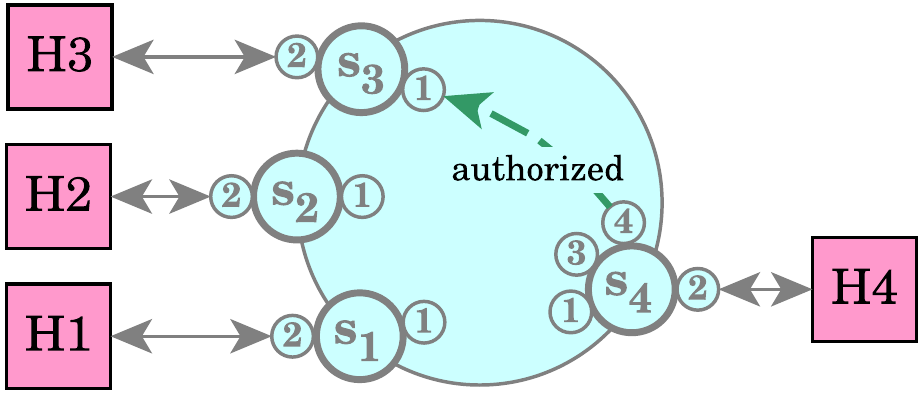}}
\centerline{(c, e)}
\end{minipage}
\caption{Topologies: (a) Firewall, (b) Learning Switch, (c) Authentication, (d) Bandwidth Cap, (e) Intrusion Detection System.}
\label{fig:examples-topo}
\end{figure*}

\newcommand{\thehrule}{\vspace{6pt}\hrule\vspace{6pt}}%
\begin{figure}[t]
\begin{minipage}{.1\linewidth}
\vspace{-1.0cm}
(a)

\vspace{1.25cm}

(b)

\vspace{1.75cm}
(c)

\vspace{2.75cm}
(d)

\vspace{3.0cm}
(e)
\end{minipage}\begin{minipage}{.9\linewidth}
\begin{lstlisting}[mathescape]
pt=2 $\land$ ip_dst=H4; pt$\leftarrow$1; (state=[0]; $(1{:}1){\rightarrowtriangle}(4{:}1){\rightarrowtriangle}\langle$state$\leftarrow$[1]$\rangle$ + state$\not=$[0]; $(1{:}1){\rightarrowtriangle}(4{:}1)$); pt$\leftarrow$2
+ pt=2 $\land$ ip_dst=H1; state=[1]; pt$\leftarrow$1; $(4{:}1){\rightarrowtriangle}(1{:}1)$; pt$\leftarrow$2
\end{lstlisting}
\thehrule
\begin{lstlisting}[mathescape]
pt=2 $\land$ ip_dst=H1; (pt$\leftarrow$1; $(4{:}1){\rightarrowtriangle}(1{:}1)$ + state=[0]; pt$\leftarrow$3; $(4{:}3){\rightarrowtriangle}(2{:}1)$); pt$\leftarrow$2
+ pt=2 $\land$ ip_dst=H4; pt$\leftarrow$1; $(1{:}1){\rightarrowtriangle}(4{:}1){\rightarrowtriangle}\langle$state$\leftarrow$[1]$\rangle$; pt$\leftarrow$2
+ pt=2; pt$\leftarrow$1; $(2{:}1){\rightarrowtriangle}(4{:}3)$; pt$\leftarrow$2
\end{lstlisting}
\thehrule
\begin{lstlisting}[mathescape]
state=[0] $\land$ pt=2 $\land$ ip_dst=H1; pt$\leftarrow$1; $(4{:}1){\rightarrowtriangle}(1{:}1){\rightarrowtriangle}\langle$state$\leftarrow$[1]$\rangle$; pt$\leftarrow$2
+ state=[1] $\land$ pt=2 $\land$ ip_dst=H2; pt$\leftarrow$3; $(4{:}3){\rightarrowtriangle}(2{:}1){\rightarrowtriangle}\langle$state$\leftarrow$[2]$\rangle$; pt$\leftarrow$2
+ state=[2] $\land$ pt=2 $\land$ ip_dst=H3; pt$\leftarrow$4; $(4{:}4){\rightarrowtriangle}(3{:}1)$; pt$\leftarrow$2
+ pt=2; pt$\leftarrow$1; ($(1{:}1){\rightarrowtriangle}(4{:}1)$ + $(2{:}1){\rightarrowtriangle}(4{:}3)$ + $(3{:}1){\rightarrowtriangle}(4{:}4)$); pt$\leftarrow$2
\end{lstlisting}
\thehrule
\begin{lstlisting}[mathescape]
pt=2 $\land$ ip_dst=H4;
    pt$\leftarrow$1; (
        state=[0]; $(1{:}1){\rightarrowtriangle}(4{:}1){\rightarrowtriangle}\langle$state$\leftarrow$[1]$\rangle$
        + state=[1]; $(1{:}1){\rightarrowtriangle}(4{:}1){\rightarrowtriangle}\langle$state$\leftarrow$[2]$\rangle$
        + state=[2]; $(1{:}1){\rightarrowtriangle}(4{:}1){\rightarrowtriangle}\langle$state$\leftarrow$[3]$\rangle$
                    $\vdots$
        + state=[10]; $(1{:}1){\rightarrowtriangle}(4{:}1){\rightarrowtriangle}\langle$state$\leftarrow$[11]$\rangle$
        + state=[11]; $(1{:}1){\rightarrowtriangle}(4{:}1)$
    ); pt$\leftarrow$2
+ pt=2 $\land$ ip_dst=H1; state$\not=$[11]; pt$\leftarrow$1; $(4{:}1){\rightarrowtriangle}(1{:}1)$; pt$\leftarrow$2
\end{lstlisting}
\thehrule
\begin{lstlisting}[mathescape]
pt=2 $\land$ ip_dst=H1; pt$\leftarrow$1; (state=[0]; $(4{:}1){\rightarrowtriangle}(1{:}1){\rightarrowtriangle}\langle$state$\leftarrow$[1]$\rangle$ + state$\not=$[0]; $(4{:}1){\rightarrowtriangle}(1{:}1)$); pt$\leftarrow$2
+ pt=2 $\land$ ip_dst=H2; pt$\leftarrow$3; (state=[1]; $(4{:}3){\rightarrowtriangle}(2{:}1){\rightarrowtriangle}\langle$state$\leftarrow$[2]$\rangle$ + state$\not=$[1]; $(4{:}3){\rightarrowtriangle}(2{:}1)$); pt$\leftarrow$2
+ pt=2 $\land$ ip_dst=H3; pt$\leftarrow$4; state$\not=$[2]; $(4{:}4){\rightarrowtriangle}(3{:}1)$; pt$\leftarrow$2
+ pt=2; pt$\leftarrow$1; ($(1{:}1){\rightarrowtriangle}(4{:}1)$ + $(2{:}1){\rightarrowtriangle}(4{:}3)$ + $(3{:}1){\rightarrowtriangle}(4{:}4)$); pt$\leftarrow$2
\end{lstlisting}
\end{minipage}
\caption{Programs: (a) Firewall, (b) Learning Switch, (c) Authentication, (d) Bandwidth Cap, (e) Intrusion Detection System.}
\label{fig:examples-code}
\end{figure}

\propose{In the first set of experiments,
\xmnote{\FiveStar}{Q2}%
we compare {\em correct} behavior
(produced by our implementation strategy) with that of an
{\em uncoordinated} update strategy.
We simulate an uncoordinated strategy in the following way:
events are sent to the controller,
which pushes updates to the switches (in an unpredictable order)
after a few-seconds time delay.
  We believe this delay is reasonable
  because heavily using the controller and frequently updating 
  switches can lead to delays between operations of several seconds in practice
  (e.g. \cite{jin2014dynamic} reports up to 10s for a single switch
  update).}

To show that problems still arise for smaller delays,
in the firewall experiment described next, we varied the
time delay in the uncoordinated strategy between 0ms and 5000ms (in increments of 100ms),
running the experiment 10 times for each. We then plotted the total number of incorrectly-dropped
packets with respect to delay. The results are shown in Figure \ref{fig:firewall_correct_corr}.
Note that even with a very small delay, the uncoordinated strategy
still always drops at least one packet.

\paragraph{Stateful Firewall.} 
The example in Figures~\ref{fig:examples-topo}-\ref{fig:examples-code}(a)
is a simplified stateful firewall. It always allows ``outgoing"
traffic (from H1 to H4), but only allows ``incoming" traffic (from H4
to H1) after the outside network has been contacted, i.e. ``outgoing"
traffic has been forwarded to H4.

Program $p$ corresponds to configurations
$C_{[0]} = \llbracket{p}\rrbracket_{[0]}$
and
$C_{[1]} = \llbracket{p}\rrbracket_{[1]}$.
In the former, only outgoing traffic is allowed, and in the
latter, both outgoing and incoming are allowed.
The ETS has the form $\{\langle [0] \rangle \xrightarrow{(dst{=}H4,\,{4{:}1})} \langle [1] \rangle\}$.
The NES has the form $\{E_0{=}\emptyset \rightarrow E_1{=}\{(dst{=}H4,\,{4{:}1})\}\}$,
where the $g$ is given by $g(E_0)=C_{[0]}$, $g(E_1)=C_{[1]}$.

The Stateful Firewall example took $0.013$s to compile, and produced a total of 18 flow-table rules.
In Figure \ref{fig:firewall_correct}(a), we show that the running firewall has
the expected behavior. We first try to ping H1 from H4 (the ``H4-H1"/red points),
which fails.
Then we ping H4 from H1 (the ``H1-H4"/orange points), which succeeds.
Again we try H4-H1, and now this succeeds, since the event-triggered state change occurred.

For the uncoordinated strategy, Figure \ref{fig:firewall_correct}(b) shows that
some of the H1-H4 pings get dropped (i.e. H1 does not hear back from H4), meaning the
state change did not behave as if it was caused immediately upon arrival of a packet at S4.

\begin{figure}[b]
\centering
\includegraphics[trim = 0.0in 0.0in 0.0in 0.0in, clip,width=0.875\linewidth]{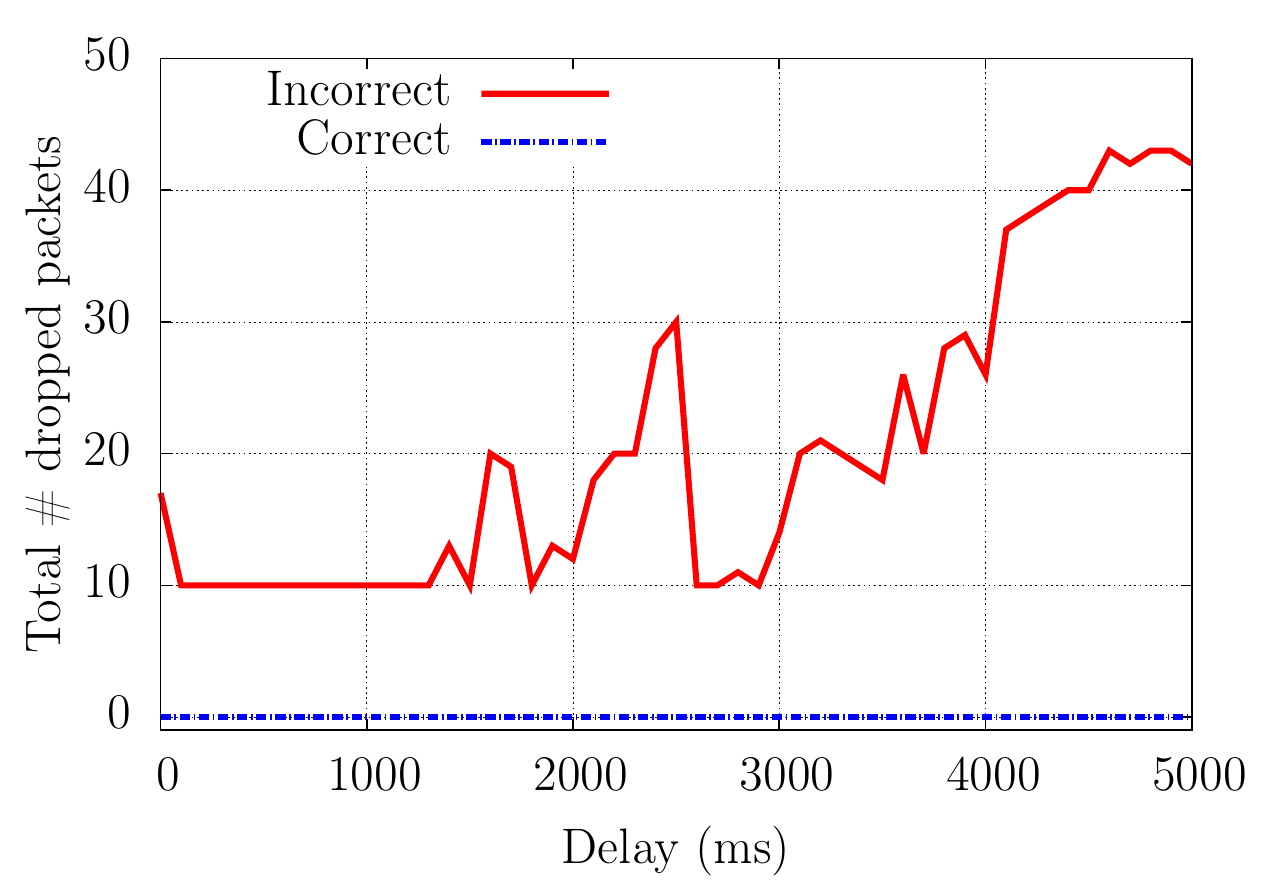}
\caption{Stateful Firewall: impact of delay.}
\label{fig:firewall_correct_corr}
\end{figure}

\begin{figure}[t]
\centering
\bgroup
\def\arraystretch{0.5}
\begin{tabular}{c}
\includegraphics[trim = 0.0in 0.35in 0.0in 0.0in, clip,width=0.86\linewidth]{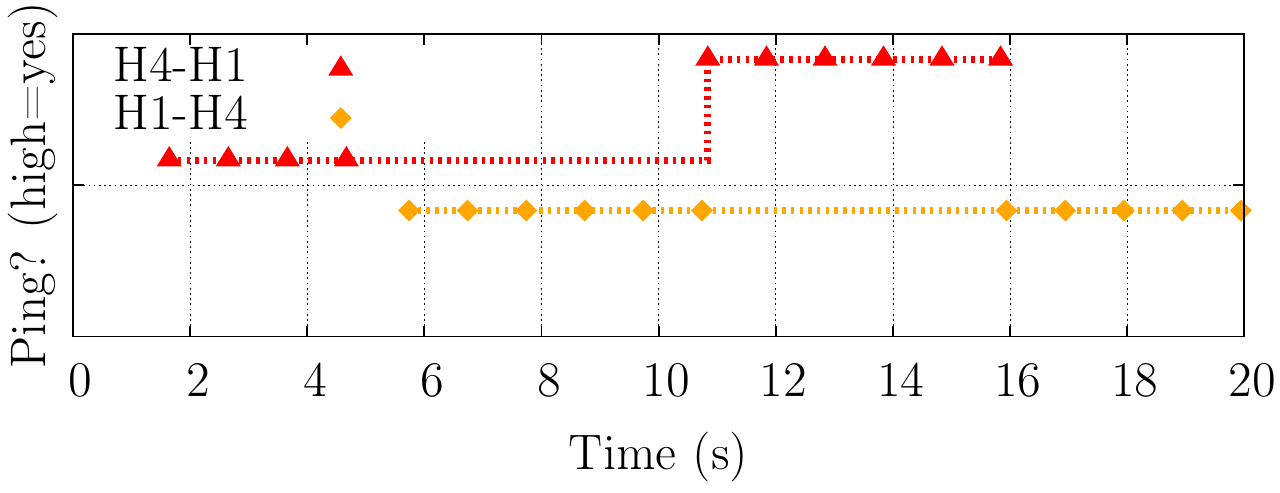} \\
{\scriptsize (a)} \\
\includegraphics[trim = 0.0in 0.0in 0.0in 0.0in, clip,width=0.86\linewidth]{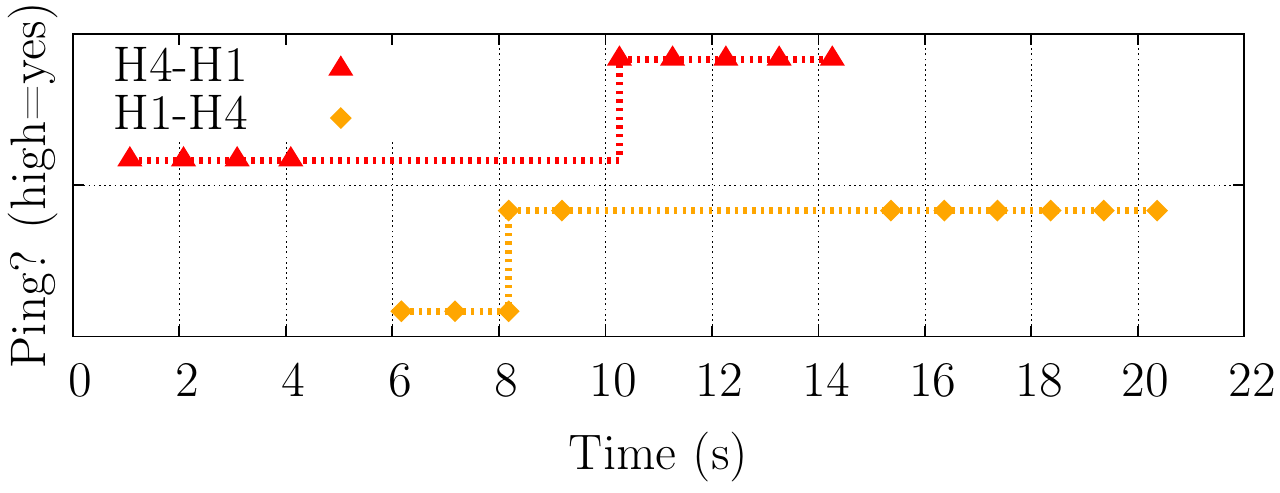} \\
{\scriptsize (b)}
\end{tabular}
\egroup
\caption{Stateful Firewall: (a) correct vs. (b) incorrect.}
\label{fig:firewall_correct}
\end{figure}

\paragraph{Learning Switch.} The example in Figures~\ref{fig:examples-topo}-\ref{fig:examples-code}(b)
is a simple learning switch. Traffic from H4 to H1 is flooded (sent
to both H1 and H2), until H4 receives a packet from H1, at which
point it ``learns" the address of H1,
and future traffic from H4 to H1 is sent only to H1.

\begin{figure}[b]
\centering
\bgroup
\def\arraystretch{0.5}
\begin{tabular}{c}
\includegraphics[trim = 0.0in 0.345in 0.0in 0.0in, clip,width=0.86\linewidth]{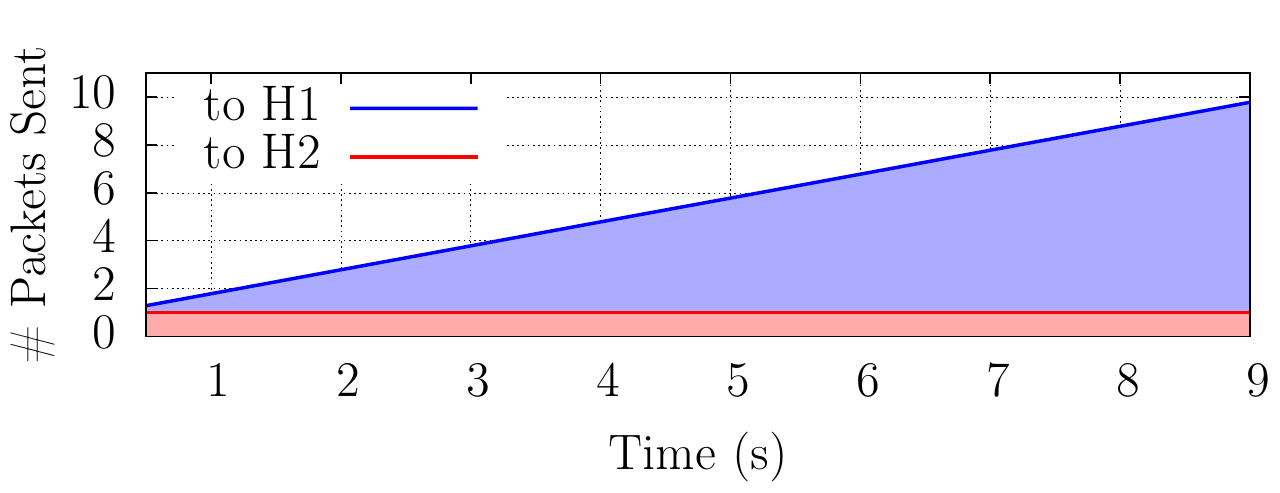} \\
{\scriptsize (a)} \\
\includegraphics[trim = 0.0in 0.0in 0.0in 0.0in, clip,width=0.86\linewidth]{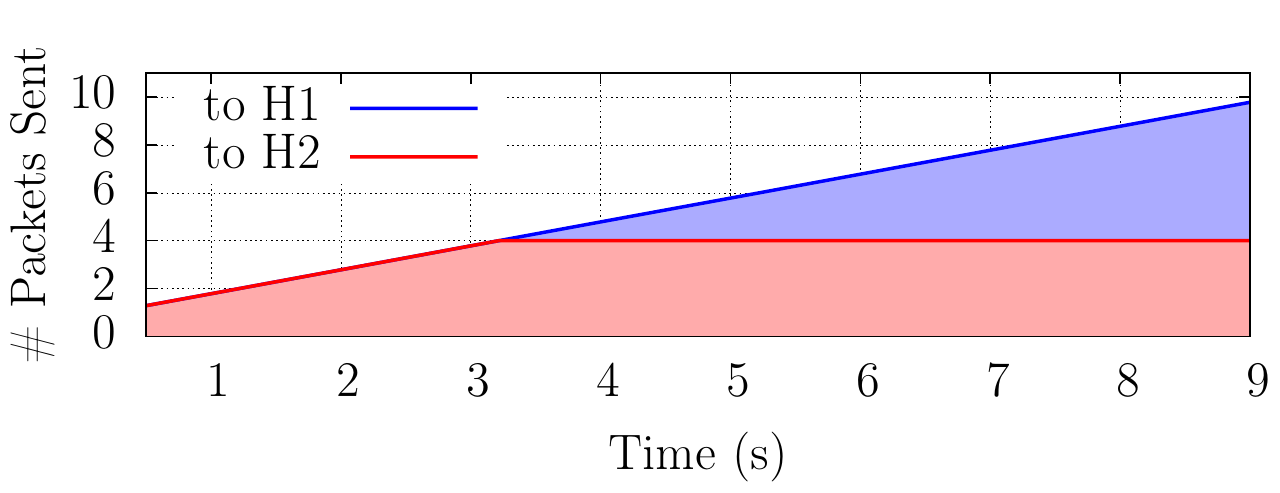} \\
{\scriptsize (b)}
\end{tabular}
\egroup
\caption{Learning Switch: (a) correct vs. (b) incorrect.}
\label{fig:learning_correct}
\end{figure}

This program $p$ corresponds to two configurations
$C_{[0]} = \llbracket{p}\rrbracket_{[0]}$
and
$C_{[1]} = \llbracket{p}\rrbracket_{[1]}$.
In the former, flooding occurs from H4, and in the
latter, packets from H4 are forwarded directly to H1.
The ETS has the form $\{\langle [0] \rangle \xrightarrow{(dst{=}H4,\,{4{:}1})} \langle [1] \rangle\}$.
The NES has the form $\{E_0{=}\emptyset \rightarrow E_1{=}\{(dst{=}H4,\,{4{:}1})\}\}$,
where the $g$ is given by $g(E_0)=C_{[0]}$, $g(E_1)=C_{[1]}$.

This only allows learning for a single host (H1), but
we could easily add learning for H2 by using a different index in the vector-valued {\em state} field:
we could replace $\kw{state}$ in Figure~\ref{fig:examples-code}(b) with $\kw{state}(0)$,
and union the program (using the NetKAT ``$+$" operator) with another instance of
Figure \ref{fig:examples-code}(b) which learns for H2 and uses $\kw{state}(1)$.

The Learning Switch example took $0.015$s to compile, and produced a total of 43 flow-table rules. We again compare the behavior of our correct implementation with that of an
implementation which uses an uncoordinated update strategy.
We first ping H1 from H4. Expected behavior is shown in Figure \ref{fig:learning_correct}(a), 
where the first packet is flooded to both H1 and H2, but then H4 hears a
reply from H1, causing the state change (i.e. learning H1's address),
and all subsequent packets are sent only to H1.
In Figure \ref{fig:learning_correct}(b), however, since the state change can be
delayed, multiple packets are sent to H2, even after H4 has seen a reply from H1.

\paragraph{Authentication.} In this example, shown in Figures~\ref{fig:examples-topo}-\ref{fig:examples-code}(c), the untrusted host H4 wishes to contact H3,
but can only do so after contacting H1 and then H2, in
that order.

\begin{figure}
\centering
\bgroup
\def\arraystretch{0.5}
\begin{tabular}{c}
\includegraphics[trim = 0.0in 0.35in 0.0in 0.0in, clip,width=0.86\linewidth]{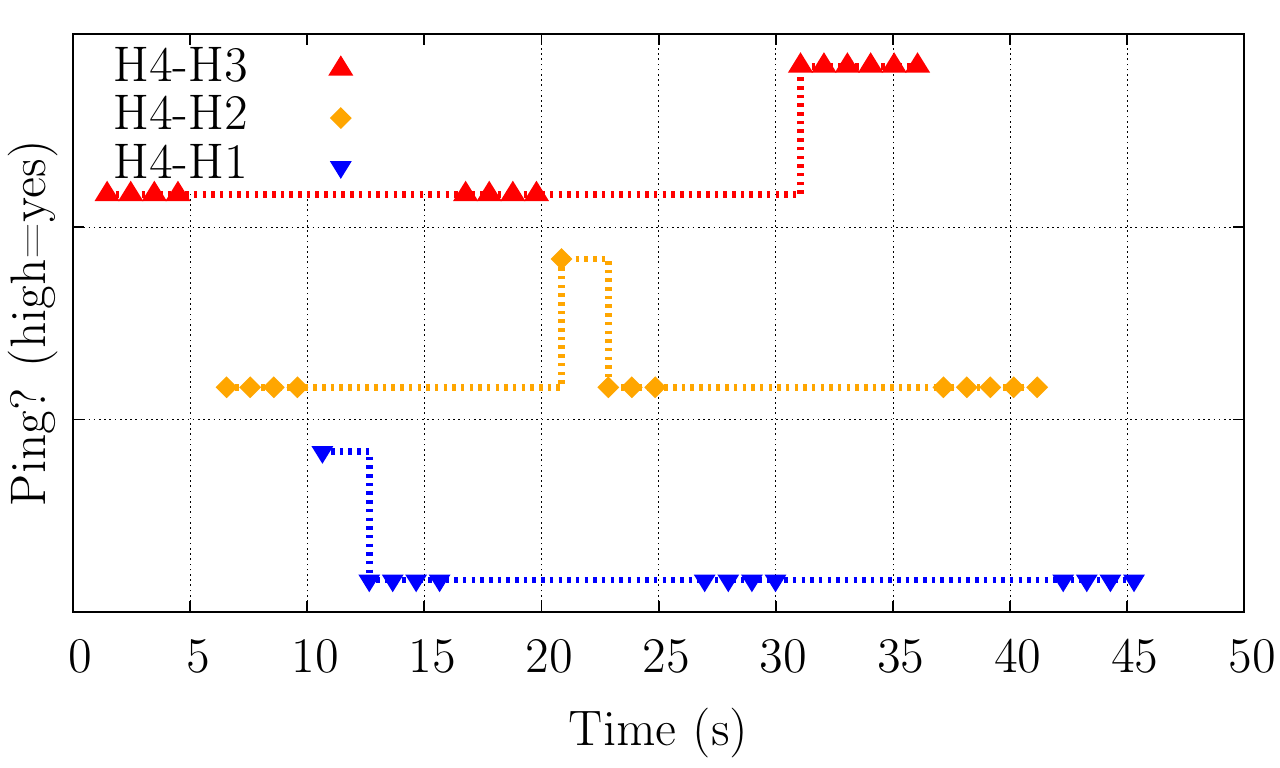} \\
{\scriptsize (a)} \\
\includegraphics[trim = 0.0in 0.0in 0.0in 0.0in, clip,width=0.86\linewidth]{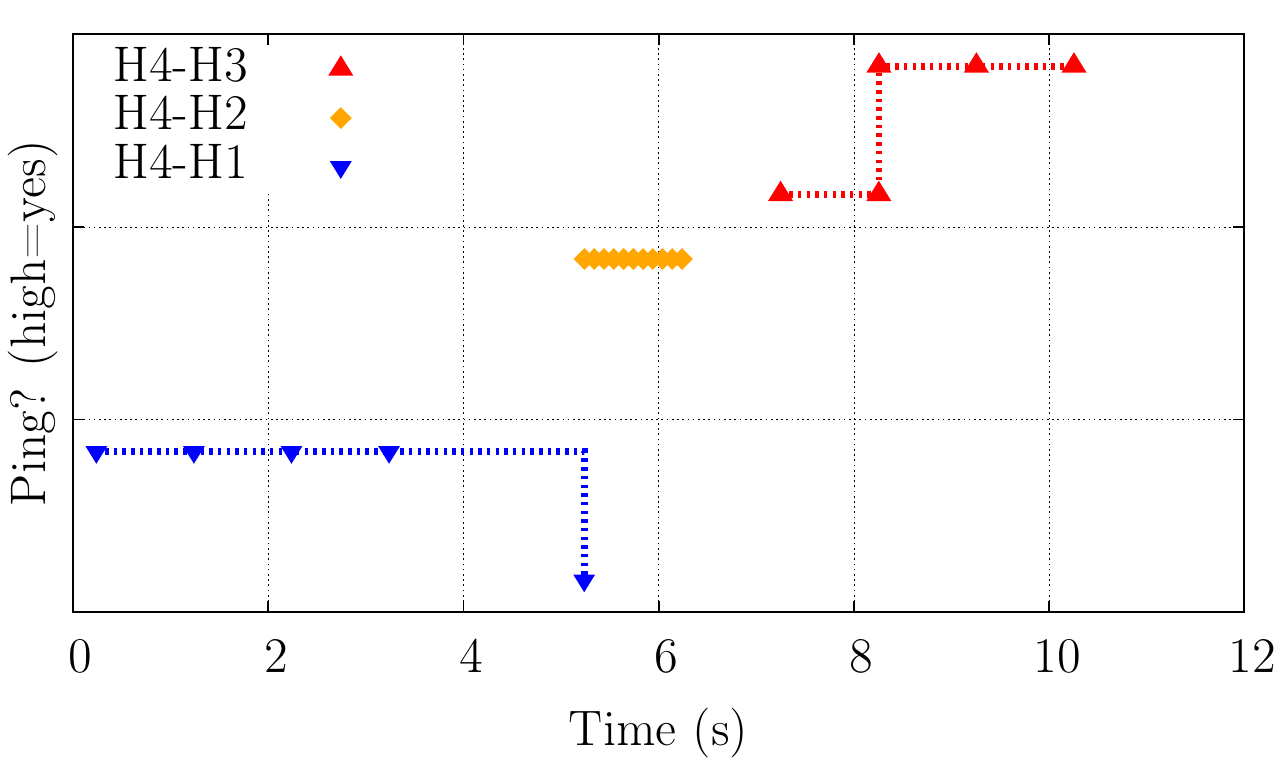} \\
{\scriptsize (b)}
\end{tabular}
\egroup
\caption{Authentication: (a) correct vs. (b) incorrect.}
\label{fig:knocking_correct}
\end{figure}

This program $p$ corresponds to three configurations:
$C_{[0]} = \llbracket{p}\rrbracket_{[0]}$ in which only H4-H1 traffic is enabled,
$C_{[1]} = \llbracket{p}\rrbracket_{[1]}$ in which only H4-H2 traffic is enabled,
and $C_{[2]} = \llbracket{p}\rrbracket_{[2]}$ which finally allows H4 to communicate with H3.
The ETS has the form $\{\langle [0] \rangle \xrightarrow{(dst{=}H1,\,{1{:}1})} \langle [1] \rangle \xrightarrow{(dst{=}H2,\,{2{:}1})} \langle [2] \rangle\}$.
The NES has the form $\{E_0{=}\emptyset \rightarrow E_1{=}\{(dst{=}H1,\,{1{:}1})\} \rightarrow E_2{=}\{(dst{=}H1,\,{1{:}1}),(dst{=}H2,\,\allowbreak{2{:}1})\}\}$,
where the $g$ function is given by $g(E_0)=C_{[0]}$, $g(E_1)=C_{[1]}$, $g(E_2)=C_{[2]}$.

The Authentication example took $0.017$s to compile, and produced a total of 72 flow-table rules.
In Figure \ref{fig:knocking_correct}(a) we demonstrate the correct behavior of the
program, by first trying (and failing) to ping H3 and H2 from H4, then successfully
pinging H1, again failing to ping H3 (and H1), and finally succeeding in pinging H3.
The incorrect (uncoordinated) implementation in Figure \ref{fig:knocking_correct}(b) allows an incorrect behavior where
we can successfully ping H1 and then H2, but then fail to ping H3 (at least temporarily).

\paragraph{Bandwidth Cap.}
The Figure~\ref{fig:examples-topo}-\ref{fig:examples-code}(d) example
is a simplified bandwidth cap implementation. It allows ``outgoing"
traffic (H1-H4), but only until the limit of $n$ packets
has been reached,
at which point the service provider replies with a notification message, and
disallows the ``incoming" path.
In this experiment, we use a bandwidth cap of $n=10$ packets.

\begin{figure}[t]
\centering
\bgroup
\def\arraystretch{0.5}
\begin{tabular}{c}
\includegraphics[trim = 0.0in 0.35in 0.0in 0.0in, clip,width=0.86\linewidth]{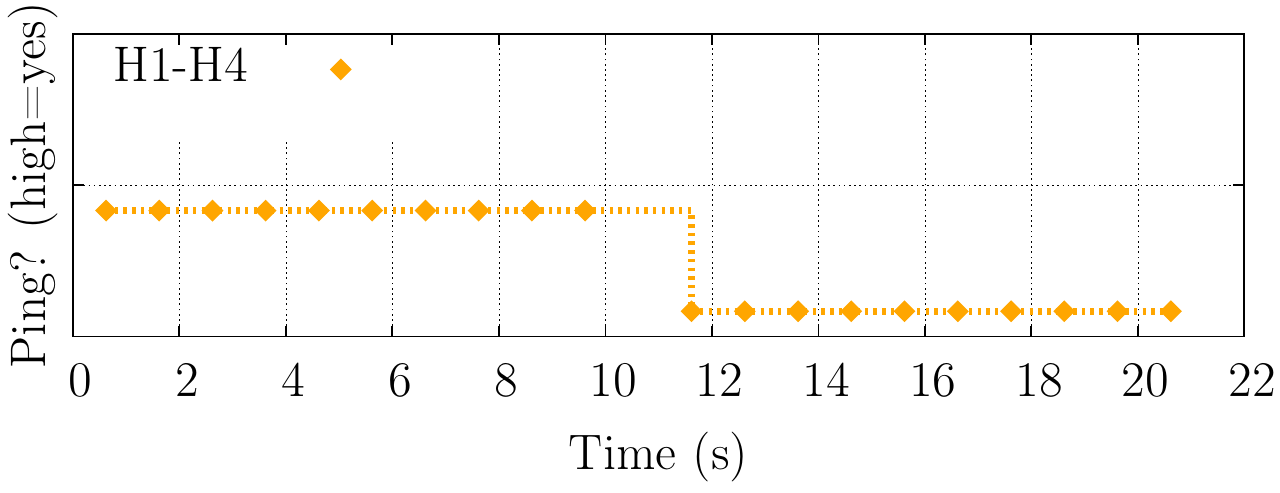} \\
{\scriptsize (a)} \\
\includegraphics[trim = 0.0in 0.0in 0.0in 0.0in, clip,width=0.86\linewidth]{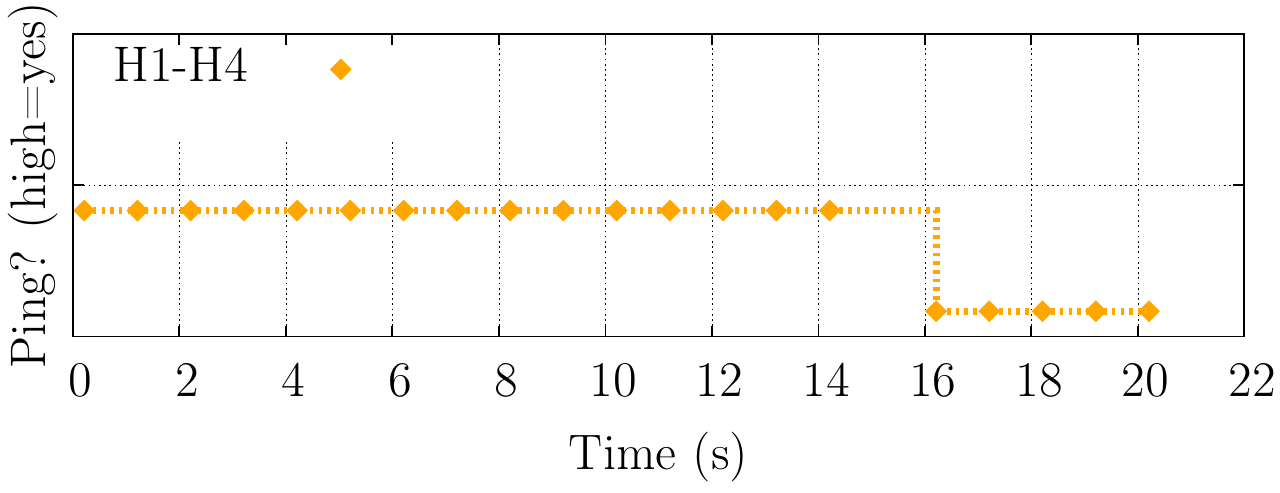} \\
{\scriptsize (b)}
\end{tabular}
\egroup
\caption{Bandwidth Cap: (a) correct vs. (b) incorrect.}
\label{fig:ucap_correct}
\end{figure}

Program $p$ corresponds to configurations
$C_{[0]}{=}\llbracket{p}\rrbracket_{[0]},\allowbreak \cdots,\allowbreak C_{[n]}{=}\llbracket{p}\rrbracket_{[n]}$,
which all allow incoming/outgoing traffic, and a configuration 
$C_{[n+1]}{=}\llbracket{p}\rrbracket_{[n+1]}$ which disallows the incoming traffic.
The ETS has the form $\{\langle [0] \rangle \xrightarrow{(dst{=}H4,\,{4{:}1})} \langle [1] \rangle \xrightarrow{(dst{=}H4,\,{4{:}1})} \cdots \xrightarrow{(dst{=}H4,\,{4{:}1})} \langle [n+1] \rangle\}$.
The NES has the form $\{E_0{=}\emptyset \rightarrow E_1{=}\{(dst{=}H4,\,{4{:}1})\} \rightarrow \cdots
\rightarrow E_{n+1}{=}\{(dst{=}H4,\,{4{:}1})_0,\cdots,(dst{=}H4,\,{4{:}1})_n\} \}$,
where the $g$ is given by $g(E_0)=C_{[0]}, \cdots, g(E_{n+1})=C_{[n+1]}$.
Note that the subscripts on events in the NES event-sets (e.g. the ones in $E_{n+1}$) indicate ``renamed" copies of
the same event (as described in Section \ref{subsec:ets}).

The Bandwidth Cap example took $0.023$s to compile, and produced a total of $158$ flow-table rules.
In Figure \ref{fig:ucap_correct}(a), we show that the running example has
the expected behavior. We send pings from H1 to H4, of which exactly 10 succeed, meaning we have
reached the bandwidth cap.
Using the uncoordinated update strategy in Figure \ref{fig:ucap_correct}(b), we again send pings
from H1 to H4, but in this case, 15 are successful, exceeding the bandwidth cap.

\paragraph{Intrusion Detection System.} In this example, shown in Figures~\ref{fig:examples-topo}-\ref{fig:examples-code}(e),
the external host H4 is initially free to communicate with the internal hosts H1, H2, and H3.
However, if H4 begins engaging in some type of suspicious activity (in this case,
beginning to scan through the hosts, e.g. contacting H1 and then H2, in that order),
the activity is thwarted (in this case, by cutting off access to H3).

\begin{figure}
\centering
\bgroup
\def\arraystretch{0.5}
\begin{tabular}{c}
\includegraphics[trim = 0.0in 0.35in 0.0in 0.0in, clip,width=0.86\linewidth]{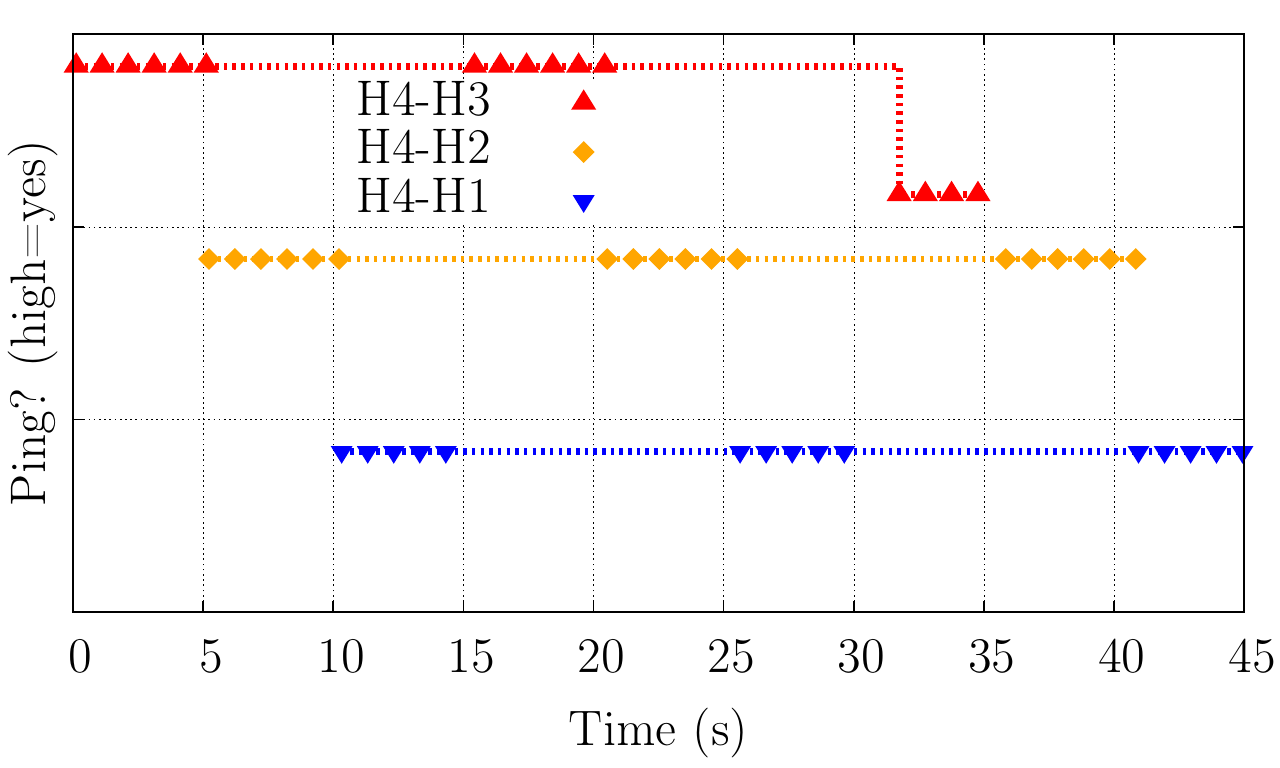} \\
{\scriptsize (a)} \\
\includegraphics[trim = 0.0in 0.0in 0.0in 0.0in, clip,width=0.86\linewidth]{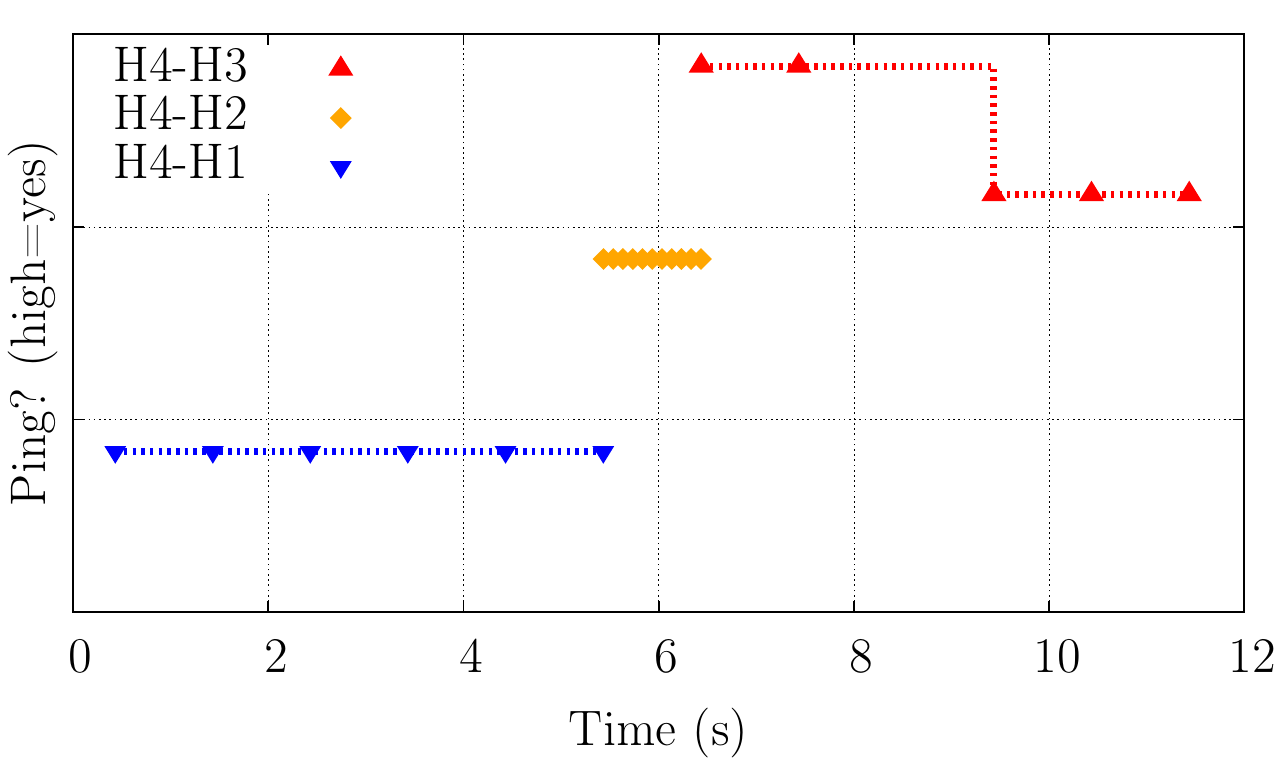} \\
{\scriptsize (b)}
\end{tabular}
\egroup
\caption{Intrusion Detection System: (a) correct vs. (b) incorrect.}
\label{fig:ids_correct}
\end{figure}

This program $p$ corresponds to three configurations:
$C_{[0]} = \llbracket{p}\rrbracket_{[0]}$ and
$C_{[1]} = \llbracket{p}\rrbracket_{[1]}$, in which all traffic is enabled,
and $C_{[2]} = \llbracket{p}\rrbracket_{[2]}$ in which H4-H3 communication is disabled.
The ETS has the form $\{\langle [0] \rangle \xrightarrow{(dst{=}H1,\,{1{:}1})} \langle [1] \rangle \xrightarrow{(dst{=}H2,\,{2{:}1})} \langle [2] \rangle\}$.
The NES has the form $\{E_0{=}\emptyset \rightarrow E_1{=}\{(dst{=}H1,\,{1{:}1})\} \rightarrow E_2{=}\{(dst{=}H1,\,{1{:}1}),(dst{=}H2,\,\allowbreak{2{:}1})\}\}$,
where the $g$ function is given by $g(E_0)=C_{[0]}$, $g(E_1)=C_{[1]}$, $g(E_2)=C_{[2]}$.

This IDS example took $0.021$s to compile and produced $152$ flow-table rules.
In Figure \ref{fig:ids_correct}(a), we demonstrate the correct behavior of the
program, by first successfully pinging H3, H2, H1, H3, H2, H1 (in that order) from H4.
This results in a situation where we have contacted H1 and then H2, causing the third attempt to
contact H3 to be blocked (H4-H3 pings dropped).
The incorrect (uncoordinated) implementation in Figure \ref{fig:ids_correct}(b) allows a faulty
behavior where we can successfully ping H1 and then H2 (in that order), but subsequent H4-H3 traffic is
still enabled temporarily.

\subsection{Quantitative Results}

In this experiment, we automatically generated some event-driven programs which specify
that two hosts H1 and H2 are connected to opposite sides of a ring of switches. Initially,
traffic is forwarded clockwise, but when a specific switch detects a (packet) event, the
configuration changes to forward counterclockwise.
We increased the ``diameter" of the ring (distance from H1 to H2) up to 8, as shown in
Figure \ref{fig:bandwidth}, and performed the following two experiments.
\begin{compactenum}
\item We used {\tt iperf} to measure H1-H2 TCP/UDP bandwidth, and compared the
performance of our running event-driven program, versus that of the initial (static)
configuration of the program running on un-modified OpenFlow 1.0 reference
switches/controller. Figure \ref{fig:bandwidth}(a) shows that our performance
(solid line) is very close to the performance of a system which does not do packet tagging, event
detection, etc. (dashed line)---we see around 6\% performance degradation on average (note
that the solid and dashed lines almost coincide).
\item We measured maximum and average time needed for a switch to learn about the event.
The ``Max." and ``Avg." bars in Figure \ref{fig:bandwidth}(b) are these numbers when the
controller does not assist in disseminating events (i.e. only the packet digest is used),
and the other columns are the maximum and average when the controller does so.
\end{compactenum}

\begin{figure}
\centering
\bgroup
\def\arraystretch{0.5}
\begin{tabular}{c}
\includegraphics[trim = 0.0in 0.25in 0.0in 0.0in, clip,width=0.86\linewidth]{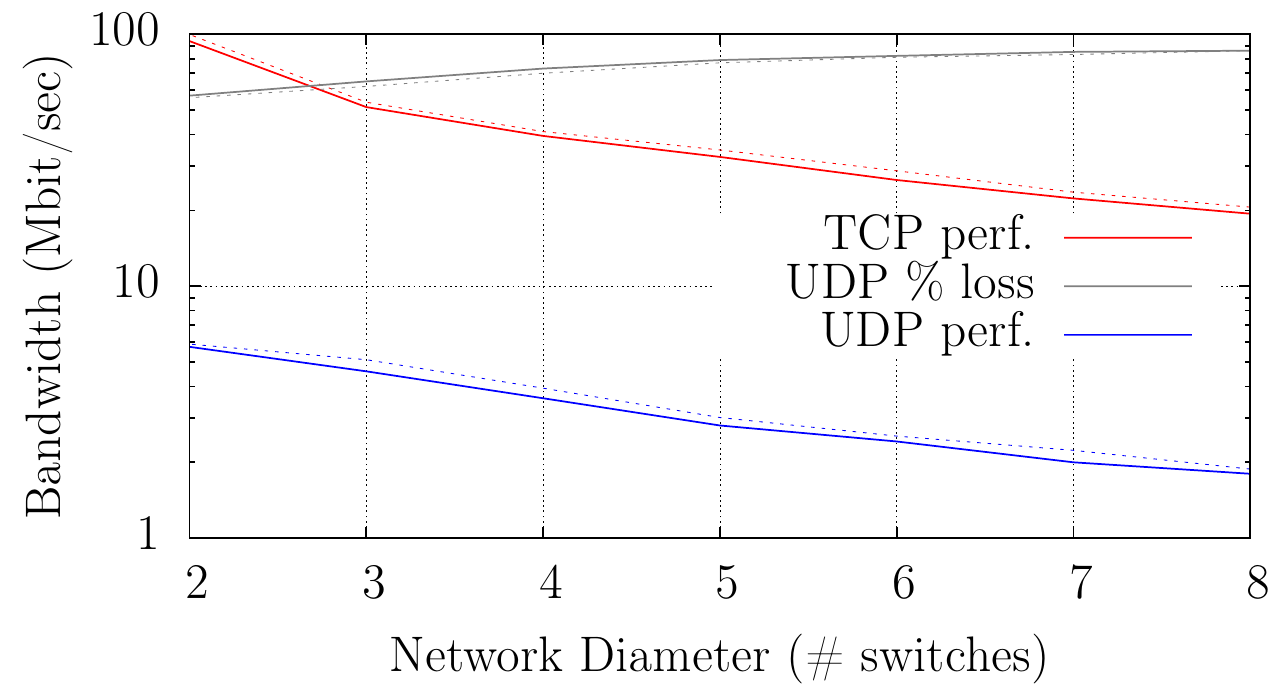} \\
{\scriptsize (a)} \\
\includegraphics[trim = 0.0in 0.0in 0.0in 0.0in, clip,width=0.86\linewidth]{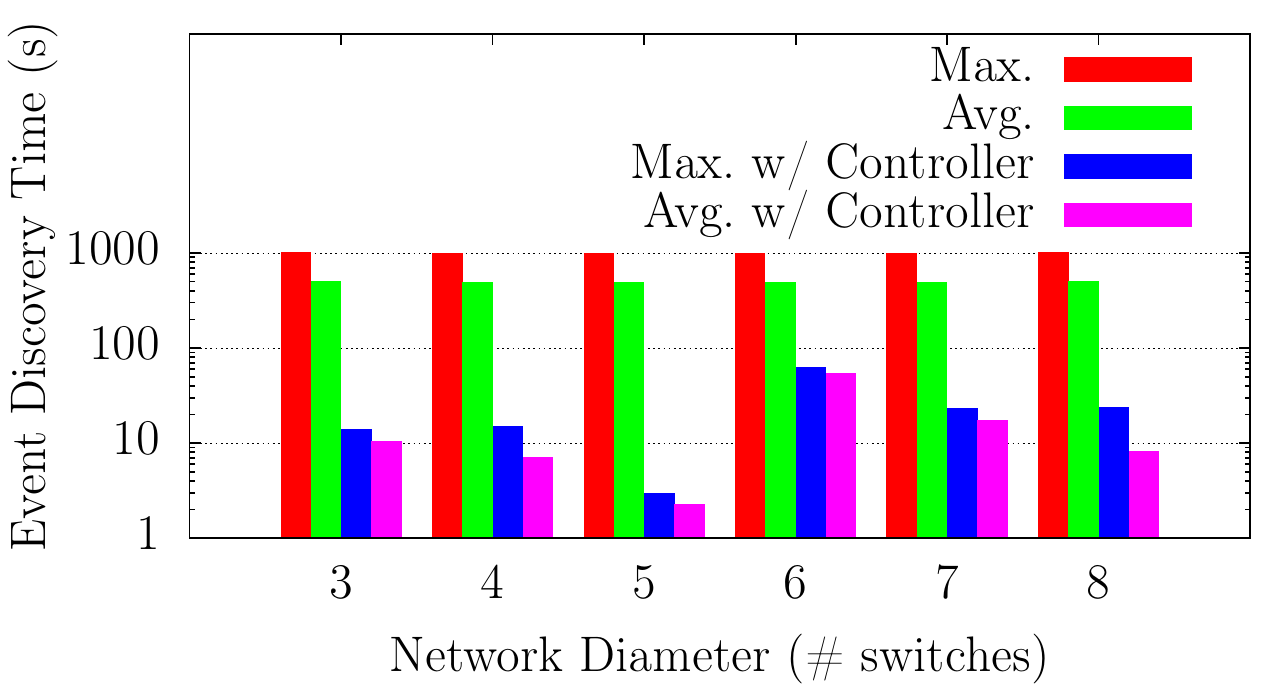} \\
{\scriptsize (b)}
\end{tabular}
\egroup
\caption{Circular Example: (a) bandwidth (solid line is ours, dotted line is reference implementation)
and (b) convergence.}
\label{fig:bandwidth}
\end{figure}

\begin{figure}[t]
\footnotesize
\begin{tikzpicture}
  \begin{axis}[%
    xlabel style={align=center,text width=\linewidth}, xlabel=Number of Rules w/ Heuristic,%
    ylabel=\# Original Rules,%
    log ticks with fixed point,
    legend pos = north west,%
    width=\linewidth,
    height=0.4\linewidth,
    legend style={fill=none},
    legend cell align=left,
    grid=major,%
        ]
    \addplot [mark=x,only marks, red, mark size=2pt] table [%
    x=good,%
    y=bad,%
    col sep=comma,%
    ignore chars=?] 
    {rule-num.csv};
\addplot[color=darkgray,very thick,dotted] coordinates {
            (275, 275)
            (345, 345)
        }  node[pos=0.9,pin={[pin distance=4pt]95:{\color{gray}$x=y$}}] {};
  \end{axis}
\end{tikzpicture}
\caption{Heuristic: reducing the number of rules.}
\label{fig:heuristic}
\end{figure}
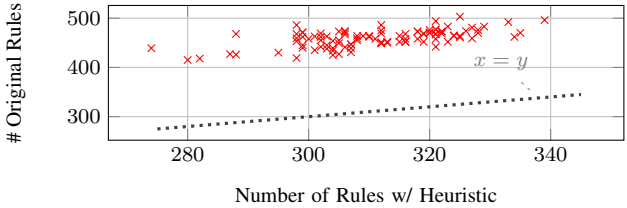

\subsection{Optimizations}
\label{sec:opt}

When a configuration change occurs,
the old and new configurations are often similar, 
differing only in a subset of flow-table rules.
Tables are commonly stored in TCAM memory on switches, which is
limited/costly, so it is undesirable to store duplicate rules. 
As mentioned in Section \ref{subsec:building_blocks}, each
of our rules is guarded by its configuration's numeric ID.
If the same rule occurs in several configurations having
IDs with the same (binary) high-order bits,
intuitively we can reduce space
usage by keeping a single copy of the rule, and guarding it
with a configuration ID having the shared high-order bits, and {\em wildcarded} low-order bits.
For example, if rule $r$ is used in two different 
configurations having IDs $2$ (binary $10$) and $3$ (binary $11$), 
we can wildcard the lowest bit $(1*)$, and keep a single rule $(1*)r$ having this wildcarded
guard, instead of two copies of $r$, with the ``$10$" and ``$11$" guards.
Ideally, we would like to (re)assign numeric IDs to the configurations, such
that maximal sharing of this form is achieved.

We formalize the problem as follows. 
Assume there is a set of all possible rules ${\mathcal R}$.
A configuration $C$ is a subset of these rules $C\subseteq {\mathcal R}$.
Assume there are $k$ bits in a configuration ID.
Without loss of generality we assume there are exactly $2^k$ configurations 
(if there are fewer, we can add dummy configurations, each containing all rules in ${\mathcal R}$). 
For a given set of configurations, we construct a \emph{trie} having all
of the configurations at the leaves.
This trie is a complete binary tree in which every node is marked 
with (1) a wildcarded mask that represents the configuration IDs of its children,
and (2) the intersection of the rule-sets of its children.

Consider configurations 
$C_0=\{r_1,r_2\}$,
$C_1=\{r_1,r_3\}$,
$C_2=\{r_2,r_3\}$,
$C_3=\{r_1,r_2\}$.
Figure~\ref{fig:orders} shows two different assignments of configurations to the leaves
of tries.
The number of rules for trie (a) is 6: $(0*)r_1$ , $(00)r_2$ , $(01)r_3$ , $(1*)r_2$, $(10)r_3$, $(11)r_1$.
The number of rules for trie (b) is 5: $(0*)r_1$ , $(0*)r_2$ , $(1*)r_3$ , $(10)r_1$, $(11)r_2$.
Intuitively, this is because the trie (b) has larger sets in the interior. 
Our polynomial heuristic follows that basic intuition:
it constructs the trie from the leaves up, at each level pairing nodes in a way that
maximizes the sum of the cardinalities of their sets.
This does not always produce the global maximum rule sharing, but we find that it
produces good results in practice.

\begin{figure}
{\footnotesize\hspace{-0.5cm}
\begin{tabular}{ c@{\hskip -0.3cm} c@{\hskip 0.1cm} }
\begin{tikzpicture}[level distance=0.8cm,
  level 1/.style={sibling distance=2cm},
  level 2/.style={sibling distance=1cm}]
  \node {\begin{tabular}{c} $**$ , $\emptyset$ \end{tabular}}
    child {node {\begin{tabular}{c} $0*$ , $\{r_1\}$ \end{tabular}}
      child {node {\begin{tabular}{c} $00$, \\ {\footnotesize $\{r_1,r_2\}$} \end{tabular}}}
      child {node {\begin{tabular}{c} $01$, \\ {\footnotesize $\{r_1,r_3\}$} \end{tabular}}}
    }
    child {node {\begin{tabular}{c} $1*$ , $\{r_2\}$ \end{tabular}}
    child {node {\begin{tabular}{c} $10$, \\ {\footnotesize $\{r_2,r_3\}$} \end{tabular}}}
      child {node {\begin{tabular}{c} $11$, \\ {\footnotesize $\{r_1,r_2\}$} \end{tabular}}}
    };
\end{tikzpicture}
&
\begin{tikzpicture}[level distance=0.8cm,
  level 1/.style={sibling distance=2cm},
  level 2/.style={sibling distance=1cm}]
  \node {\begin{tabular}{c} $**$ , $\emptyset$ \end{tabular}}
    child {node {\begin{tabular}{c} $0*$ , $\{r_1,r_2\}$ \end{tabular}}
      child {node {\begin{tabular}{c} $00$, \\ {\footnotesize $\{r_1,r_2\}$} \end{tabular}}}
      child {node {\begin{tabular}{c} $01$, \\ {\footnotesize $\{r_1,r_2\}$} \end{tabular}}}
    }
    child {node {\begin{tabular}{c} $1*$ , $\{r_3\}$ \end{tabular}}
    child {node {\begin{tabular}{c} $10$, \\ {\footnotesize $\{r_1,r_3\}$} \end{tabular}}}
      child {node {\begin{tabular}{c} $11$, \\ {\footnotesize $\{r_2,r_3\}$} \end{tabular}}}
    };
\end{tikzpicture}
\\
{\scriptsize (a)} & {\scriptsize (b)}
\end{tabular}
\caption{Heuristic: two different tries for the same configurations.}
\label{fig:orders}
}\end{figure}
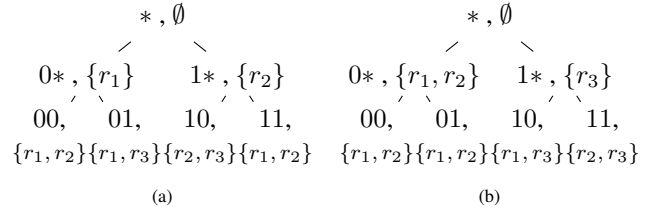

As indicated by the Figure~\ref{fig:heuristic} result (64 randomly-generate configurations w/ 20 rules),
on average, rule savings was about 32\% of the original number of rules.
We also ran this on the previously-discussed Firewall, Learning Switch, Authentication, Bandwidth Cap, and IDS examples,
and got rule reductions of $18 \rightarrow 16$, $43 \rightarrow 27$, $72 \rightarrow 46$,
$158 \rightarrow 101$, and $152 \rightarrow 133$ respectively.

\section{Related Work}
\label{sec:related}

\paragraph{Network Updates, Verification, and Synthesis.}
We already briefly mentioned an early approach known as
consistent updates \cite{reitblatt2012abstractions}. This work was
followed by update techniques that respect other correctness properties 
\cite{ludwig2014good}
\cite{jin2014dynamic}
\cite{godfrey2015nsdi}
\cite{mcclurg2015efficient}.
These approaches for expressing and verifying correctness of network updates
work in terms of {\em individual} packets.

In event-driven network programs, it is necessary to check properties
which describe interactions between {\em multiple} packets.
There are several works which seek to perform
network updates in the context of multi-packet
properties \cite{ghorbani2014towards}
\cite{liu2015inter}.
There are also proposals for synthesizing SDN controller programs from
multi-packet examples \cite{yuan2015netegg} and from first-order
specifications \cite{padon2015decentralizing}. Lopes et al. 
presented techniques for verifying reachability in stateful network
programs \cite{lopes2015checking}, using a variant of Datalog. This
is a complimentary approach which could be used as a basis for
verifying reachability properties of our stateful programs.

\balance

\paragraph{Network Programming Languages.}
Network programs can often be constructed using high-level languages.
The Frenetic project
\cite{foster2011frenetic}
\cite{monsanto2012compiler}
\cite{foster2013languages}
allows higher-level specification of network policies.
Other related projects like Merlin \cite{soule2014merlin}
and NetKAT \cite{smolka2015} 
\cite{beckett2015temporal}
provide high-level languages/tools to compile such programs to network configurations.
Works such as
Maple
\cite{voellmy2013maple} and
FlowLog \cite{nelson2014tierless}
seek to address the {\em dynamic} aspect of network programming.

None of these systems and languages provide both (1) event-based constructs,
and (2) strong semantic guarantees about
consistency during updates, while our framework enables both.
Concurrently with this paper, an approach called SNAP \cite{arashloo2015snap}
was developed, which enables event-driven programming, and allows the
programmer to ensure consistency via an {\em atomic} language construct.
Their approach offers a more expressive language than our Stateful
NetKAT, but in our approach, we enable 
correct-by-construction event-based behavior and provide a dynamic
correctness property, showing (formally) that is strong enough for easy reasoning, 
yet flexible enough to enable efficient implementations.
We also prove the correctness of our implementation technique.

\paragraph{Routing.} The consistency/availability trade-off
is of interest in routing outside the SDN context as
well. In~\cite{JKKAV08}, a solution called consensus routing is presented, based on
a notion of causality between {\em triggers} (related to our
events). However, the solution is different in many aspects, e.g.
it allows a transient phase without safety guarantees.

\paragraph{High-Level Network Functionality.}
Some recent work has proposed building powerful high-level features into
the network itself, such as fabrics \cite{casado2012fabric}, intents
\cite{onos2014intent}, and other virtualization functionality
\cite{koponen2014network}. 
Pyretic \cite{monsanto2013} and projects built on top of it
such as PyResonance \cite{kim2013simpler}, SDX \cite{gupta2014sdx},
and Kinetic \cite{kim2015kinetic} provide high-level operations on
which network programs can be built. These projects do not guarantee
consistency during updates, and thus could be profitably combined
with an approach such as ours.  

\section{Discussion and Future Work}
\label{sec:discuss}

\paragraph{Generality of Our Approach.}  
\propose{
The event-driven
\xmnote{\FiveStar}{Q10}%
SDN update problem considered in this paper
is an instance of a more general distributed-systems programming problem,
namely {\em how to write correct and efficient programs for distributed
systems}.
We provide a PL approach (consistency property, programming language, and
compiler/runtime) which ensures that the programmer need not reason about
interleavings of events and updates for each application, and we
show that our consistency model and implementation technique work well
in the context of
SDN programs, but we do not believe they are limited to that %
specific arena. Our approach could also possibly be extended to other distributed
systems in which availability is prioritized, and consistency can be
relaxed in a well-defined way, as in our event-driven consistent
updates. Example domains include wireless sensor networks or other  
message-passing systems where the nodes have basic stateful functionality.
}

\paragraph{Future Work.}  
There are several directions for future work which could
address limitations of our current system.

\begin{compactenum}
\item We assume that the
set of (potential) hosts is known in advance, and use this information
to generate corresponding flow tables for each switch. This may
not be the right choice in settings where hosts join/leave. Our
approach could be extended to represent hosts {\em
  symbolically}.
\item We currently store all configurations on the
switches, so that they are immediately available during updates. Our
optimizations allow this to be done in a space-efficient way, but
there may be situations when it would be better for the controller to
reactively push new configurations to switches. This is an
interesting
problem due to interleavings of events and controller commands.
\item It would be interesting to consider formal
reasoning and automated verification for Stateful NetKAT.
\item \propose{We provide a solution
\xmnote{\FiveStar}{Q12}%
to the problem of performing
multiple updates, and the dynamic implementations we produce are
meant to ``run" in the network indefinitely.
However, there may be ways to update the running
dynamic program itself in some consistent way.
}
\end{compactenum}

\section{Conclusion}
\label{sec:conclusion}

This paper presents a full framework for correct event-driven
programming. Our approach provides a way of rigorously defining
correct event-driven behavior without the need for specifying logical
formulas.  We detail a programming language and compiler which allow
the user to write high-level network programs and produce correct and
efficient SDN implementations, and we demonstrate the benefits of our
approach using real-world examples.  This paper considers the
challenging problem of distributing an event-based stateful network program, and
solves it in a principled way.

\acks
Many thanks to the anonymous PLDI reviewers for offering
helpful and constructive comments, as well as Zach Tatlock for
shepherding our paper and providing useful feedback.
Our work is supported by the National Science Foundation under grants
CNS-1111698, CNS-1413972, CCF-1421752, CCF-1422046, CCF-1253165, and
CCF-1535952; the Office of Naval Research under grant
N00014-15-1-2177; and gifts from Cisco, Facebook, Fujitsu, Google, and
Intel.

\clearpage

\renewcommand*{\bibfont}{\small}
\bibliographystyle{abbrvnat}
\bibliography{paper}

\begin{thebibliography}{40}
\providecommand{\natexlab}[1]{#1}
\providecommand{\url}[1]{\texttt{#1}}
\expandafter\ifx\csname urlstyle\endcsname\relax
  \providecommand{\doi}[1]{doi: #1}\else
  \providecommand{\doi}{doi: \begingroup \urlstyle{rm}\Url}\fi

\bibitem[ono(2014)]{onos2014intent}
{ONOS Intent Framework}.
\newblock 2014.
\newblock URL \url{https://wiki.onosproject.org/x/XgAZ}.

\bibitem[Anderson et~al.(2014)Anderson, Foster, Guha, Jeannin, Kozen,
  Schlesinger, and Walker]{anderson2014netkat}
C.~J. Anderson, N.~Foster, A.~Guha, J.-B. Jeannin, D.~Kozen, C.~Schlesinger,
  and D.~Walker.
\newblock {NetKAT: Semantic Foundations for Networks}.
\newblock \emph{POPL}, 2014.

\bibitem[Arashloo et~al.(2015)Arashloo, Koral, Greenberg, Rexford, and
  Walker]{arashloo2015snap}
M.~T. Arashloo, Y.~Koral, M.~Greenberg, J.~Rexford, and D.~Walker.
\newblock {SNAP: Stateful Network-Wide Abstractions for Packet Processing}.
\newblock 2015.

\bibitem[Beckett et~al.(2015)Beckett, Greenberg, and
  Walker]{beckett2015temporal}
R.~Beckett, M.~Greenberg, and D.~Walker.
\newblock {Temporal {N}et{KAT}}.
\newblock \emph{PLVNET}, 2015.

\bibitem[Bianchi et~al.(2014)Bianchi, Bonola, Capone, and
  Cascone]{bianchi2014openstate}
G.~Bianchi, M.~Bonola, A.~Capone, and C.~Cascone.
\newblock {OpenState: Programming Platform-independent Stateful Openflow
  Applications Inside the Switch}.
\newblock \emph{ACM SIGCOMM CCR}, 2014.

\bibitem[Bosshart et~al.(2014)Bosshart, Daly, Gibb, Izzard, McKeown, Rexford,
  Schlesinger, Talayco, Vahdat, Varghese, et~al.]{bosshart2014p4}
P.~Bosshart, D.~Daly, G.~Gibb, M.~Izzard, N.~McKeown, J.~Rexford,
  C.~Schlesinger, D.~Talayco, A.~Vahdat, G.~Varghese, et~al.
\newblock {P4: Programming Protocol-independent Packet Processors}.
\newblock \emph{ACM SIGCOMM CCR}, 2014.

\bibitem[Brewer(2000)]{B00}
E.~Brewer.
\newblock {Towards robust distributed systems (abstract)}.
\newblock \emph{PODC}, page~7, 2000.

\bibitem[Casado et~al.(2007)Casado, Freedman, Pettit, Luo, McKeown, and
  Shenker]{ethane-sigcomm07}
M.~Casado, M.~J. Freedman, J.~Pettit, J.~Luo, N.~McKeown, and S.~Shenker.
\newblock {Ethane: Taking Control of the Enterprise}.
\newblock \emph{SIGCOMM}, 2007.

\bibitem[Casado et~al.(2012)Casado, Koponen, Shenker, and
  Tootoonchian]{casado2012fabric}
M.~Casado, T.~Koponen, S.~Shenker, and A.~Tootoonchian.
\newblock {Fabric: A Retrospective on Evolving SDN}.
\newblock \emph{HotSDN}, 2012.

\bibitem[Foster et~al.(2011)Foster, Harrison, Freedman, Monsanto, Rexford,
  Story, and Walker]{foster2011frenetic}
N.~Foster, R.~Harrison, M.~J. Freedman, C.~Monsanto, J.~Rexford, A.~Story, and
  D.~Walker.
\newblock {Frenetic: A Network Programming Language}.
\newblock \emph{ICFP}, 2011.

\bibitem[Foster et~al.(2013)Foster, Guha, Reitblatt, Story, Freedman, Katta,
  Monsanto, Reich, Rexford, Schlesinger, et~al.]{foster2013languages}
N.~Foster, A.~Guha, M.~Reitblatt, A.~Story, M.~J. Freedman, N.~P. Katta,
  C.~Monsanto, J.~Reich, J.~Rexford, C.~Schlesinger, et~al.
\newblock {Languages for Software-Defined Networks}.
\newblock \emph{Communications Magazine, IEEE}, 51\penalty0 (2):\penalty0
  128--134, 2013.

\bibitem[Ghorbani and Godfrey(2014)]{ghorbani2014towards}
S.~Ghorbani and B.~Godfrey.
\newblock {Towards Correct Network Virtualization}.
\newblock \emph{HotSDN}, 2014.

\bibitem[Gilbert and Lynch(2012)]{GL12}
S.~Gilbert and N.~Lynch.
\newblock {Perspectives on the {CAP} Theorem}.
\newblock \emph{{IEEE} Computer}, 45\penalty0 (2):\penalty0 30--36, 2012.

\bibitem[Gupta et~al.(2014)Gupta, Vanbever, Shahbaz, Donovan, Schlinker,
  Feamster, Rexford, Shenker, Clark, and Katz-Bassett]{gupta2014sdx}
A.~Gupta, L.~Vanbever, M.~Shahbaz, S.~P. Donovan, B.~Schlinker, N.~Feamster,
  J.~Rexford, S.~Shenker, R.~Clark, and E.~Katz-Bassett.
\newblock {SDX: A Software Defined Internet Exchange}.
\newblock \emph{SIGCOMM}, 2014.

\bibitem[Hong et~al.(2013)Hong, Kandula, Mahajan, Zhang, Gill, Nanduri, and
  Wattenhofer]{swan-sigcomm13}
C.-Y. Hong, S.~Kandula, R.~Mahajan, M.~Zhang, V.~Gill, M.~Nanduri, and
  R.~Wattenhofer.
\newblock {Achieving High Utilization with {Software-driven WAN}}.
\newblock \emph{SIGCOMM}, 2013.

\bibitem[Jain et~al.(2013)]{b4-sigcomm13}
S.~Jain et~al.
\newblock {B4: Experience with a Globally-Deployed Software Defined {WAN}}.
\newblock \emph{SIGCOMM}, 2013.

\bibitem[Jin et~al.(2014)Jin, Liu, Gandhi, Kandula, Mahajan, Zhang, Rexford,
  and Wattenhofer]{jin2014dynamic}
X.~Jin, H.~H. Liu, R.~Gandhi, S.~Kandula, R.~Mahajan, M.~Zhang, J.~Rexford, and
  R.~Wattenhofer.
\newblock {Dynamic Scheduling of Network Updates}.
\newblock \emph{SIGCOMM}, 2014.

\bibitem[John et~al.(2008)John, Katz{-}Bassett, Krishnamurthy, Anderson, and
  Venkataramani]{JKKAV08}
J.~John, E.~Katz{-}Bassett, A.~Krishnamurthy, T.~Anderson, and
  A.~Venkataramani.
\newblock {Consensus Routing: The Internet as a Distributed System}.
\newblock \emph{NSDI}, 2008.

\bibitem[Kang et~al.(2013)Kang, Liu, Rexford, and Walker]{kang2013optimizing}
N.~Kang, Z.~Liu, J.~Rexford, and D.~Walker.
\newblock {Optimizing the One Big Switch Abstraction in Software-Defined
  Networks}.
\newblock \emph{CoNEXT}, 2013.

\bibitem[Kim et~al.(2013)Kim, Gupta, Shahbaz, Reich, Feamster, and
  Clark]{kim2013simpler}
H.~Kim, A.~Gupta, M.~Shahbaz, J.~Reich, N.~Feamster, and R.~Clark.
\newblock {Simpler Network Configuration with State-Based Network Policies}.
\newblock Technical report, Georgia Tech, 2013.

\bibitem[Kim et~al.(2015)Kim, Reich, Gupta, Shahbaz, Feamster, and
  Clark]{kim2015kinetic}
H.~Kim, J.~Reich, A.~Gupta, M.~Shahbaz, N.~Feamster, and R.~Clark.
\newblock {Kinetic: Verifiable Dynamic Network Control}.
\newblock \emph{NSDI}, 2015.

\bibitem[Koponen et~al.()Koponen, Amidon, Balland, Casado, Chanda, Fulton,
  Ganichev, Gross, Gude, Ingram, et~al.]{koponen2014network}
T.~Koponen, K.~Amidon, P.~Balland, M.~Casado, A.~Chanda, B.~Fulton,
  I.~Ganichev, J.~Gross, N.~Gude, P.~Ingram, et~al.
\newblock {Network Virtualization in Multi-tenant Datacenters}.
\newblock \emph{NSDI'14}.

\bibitem[Liu et~al.(2015)Liu, Bobba, Mohan, and Campbell]{liu2015inter}
W.~Liu, R.~B. Bobba, S.~Mohan, and R.~H. Campbell.
\newblock {Inter-Flow Consistency: Novel SDN Update Abstraction for Supporting
  Inter-Flow Constraints}.
\newblock \emph{NDSS}, 2015.

\bibitem[Lopes et~al.(2015)Lopes, Bj{\o}rner, Godefroid, Jayaraman, and
  Varghese]{lopes2015checking}
N.~P. Lopes, N.~Bj{\o}rner, P.~Godefroid, K.~Jayaraman, and G.~Varghese.
\newblock {Checking Beliefs in Dynamic Networks}.
\newblock \emph{NSDI}, 2015.

\bibitem[Ludwig et~al.(2014)Ludwig, Rost, Foucard, and Schmid]{ludwig2014good}
A.~Ludwig, M.~Rost, D.~Foucard, and S.~Schmid.
\newblock {Good Network Updates for Bad Packets: Waypoint Enforcement Beyond
  Destination-based Routing Policies}.
\newblock \emph{HotNets}, 2014.

\bibitem[McClurg et~al.(2015)McClurg, Hojjat, Cerny, and
  Foster]{mcclurg2015efficient}
J.~McClurg, H.~Hojjat, P.~Cerny, and N.~Foster.
\newblock {Efficient Synthesis of Network Updates}.
\newblock \emph{PLDI}, 2015.

\bibitem[Monsanto et~al.(2012)Monsanto, Foster, Harrison, and
  Walker]{monsanto2012compiler}
C.~Monsanto, N.~Foster, R.~Harrison, and D.~Walker.
\newblock {A Compiler and Run-time System for Network Programming Languages}.
\newblock \emph{POPL}, 2012.

\bibitem[Monsanto et~al.(2013)Monsanto, Reich, Foster, Rexford, and
  Walker]{monsanto2013}
C.~Monsanto, J.~Reich, N.~Foster, J.~Rexford, and D.~Walker.
\newblock {Composing Software Defined Networks}.
\newblock \emph{NSDI}, 2013.

\bibitem[Moshref et~al.(2013)Moshref, Yu, Sharma, and
  Govindan]{moshref2013scalable}
M.~Moshref, M.~Yu, A.~B. Sharma, and R.~Govindan.
\newblock {Scalable Rule Management for Data Centers}.
\newblock \emph{NSDI}, 2013.

\bibitem[Moshref et~al.(2014)Moshref, Bhargava, Gupta, Yu, and
  Govindan]{moshref2014flow}
M.~Moshref, A.~Bhargava, A.~Gupta, M.~Yu, and R.~Govindan.
\newblock {Flow-level State Transition as a New Switch Primitive for SDN}.
\newblock 2014.

\bibitem[Nelson et~al.(2014)Nelson, Ferguson, Scheer, and
  Krishnamurthi]{nelson2014tierless}
T.~Nelson, A.~D. Ferguson, M.~Scheer, and S.~Krishnamurthi.
\newblock {Tierless Programming and Reasoning for Software-Defined Networks}.
\newblock \emph{NSDI}, 2014.

\bibitem[Padon et~al.(2015)Padon, Immerman, Karbyshev, Lahav, Sagiv, and
  Shoham]{padon2015decentralizing}
O.~Padon, N.~Immerman, A.~Karbyshev, O.~Lahav, M.~Sagiv, and S.~Shoham.
\newblock {Decentralizing SDN Policies}.
\newblock \emph{POPL}, 2015.

\bibitem[Reitblatt et~al.()Reitblatt, Foster, Rexford, Schlesinger, and
  Walker]{reitblatt2012abstractions}
M.~Reitblatt, N.~Foster, J.~Rexford, C.~Schlesinger, and D.~Walker.
\newblock {Abstractions for Network Update}.
\newblock \emph{SIGCOMM'12}.

\bibitem[Sivaraman et~al.(2015)Sivaraman, Budiu, Cheung, Kim, Licking,
  Varghese, Balakrishnan, Alizadeh, and McKeown]{sivaraman2015}
A.~Sivaraman, M.~Budiu, A.~Cheung, C.~Kim, S.~Licking, G.~Varghese,
  H.~Balakrishnan, M.~Alizadeh, and N.~McKeown.
\newblock {Packet Transactions: A Programming Model for Data-Plane Algorithms
  at Hardware Speed}.
\newblock 2015.

\bibitem[Smolka et~al.(2015)Smolka, Eliopoulos, Foster, and Guha]{smolka2015}
S.~Smolka, S.~Eliopoulos, N.~Foster, and A.~Guha.
\newblock {A Fast Compiler for NetKAT}.
\newblock \emph{ICFP}, 2015.

\bibitem[Soul{\'e} et~al.(2014)Soul{\'e}, Basu, Marandi, Pedone, Kleinberg,
  Sirer, and Foster]{soule2014merlin}
R.~Soul{\'e}, S.~Basu, P.~J. Marandi, F.~Pedone, R.~Kleinberg, E.~G. Sirer, and
  N.~Foster.
\newblock {Merlin: A Language for Provisioning Network Resources}.
\newblock \emph{CoNEXT}, 2014.

\bibitem[Voellmy et~al.(2013)Voellmy, Wang, Yang, Ford, and
  Hudak]{voellmy2013maple}
A.~Voellmy, J.~Wang, Y.~R. Yang, B.~Ford, and P.~Hudak.
\newblock {Maple: Simplifying SDN Programming Using Algorithmic Policies}.
\newblock \emph{SIGCOMM}, 2013.

\bibitem[Winskel(1987)]{winskel1987event}
G.~Winskel.
\newblock \emph{{Event Structures}}.
\newblock Springer, 1987.

\bibitem[Yuan et~al.(2015)Yuan, Lin, Alur, and Loo]{yuan2015netegg}
Y.~Yuan, D.~Lin, R.~Alur, and B.~T. Loo.
\newblock {Scenario-based Programming for SDN Policies}.
\newblock \emph{CoNEXT}, 2015.

\bibitem[Zhou et~al.(2015)Zhou, Jin, Croft, Caesar, and
  Godfrey]{godfrey2015nsdi}
W.~Zhou, D.~Jin, J.~Croft, M.~Caesar, and P.~B. Godfrey.
\newblock {Enforcing Generalized Consistency Properties in Software-Defined
  Networks}.
\newblock \emph{NSDI}, 2015.

\end{thebibliography}

\end{document}